\newif\ifarxiv\arxivtrue
\newtheorem{thm}{Theorem}[section]
\newtheorem{lem}[thm]{Lemma}
\newtheorem{prop}[thm]{Proposition}
\newtheorem{cor}[thm]{Corollary}
\newtheorem{remark}[thm]{Remark}
\def\norm #1{\Vert #1\Vert}
\def\mod{{\mathop{\rm mod}\nolimits}}
\def\ind{{\mathop{\rm ind}\nolimits}\,}
\def\ket #1{\vert#1\rangle}
\def\abs#1{\vert#1\vert}
\def\Vtw{{\widetilde V}}
\def\embf{\textbf}
\newcommand{\Span}{\mathrm{span}}
\newcommand{\Ad}{\mathrm{Ad}}
\newcommand{\Ran}{\mathrm{Ran}}
\newcommand{\triv}{\mathrm{triv}}
\newcommand{\rest}{\mathrm{rest}}
\newcommand{\ed}{{\mathrm e}}
\newcommand{\ac}{\mathrm{ac}}
\newcommand{\iu}{{\mathrm i}}
\newcommand{\ie}{i.e.\;}
\newcommand{\N}{{\mathbb N}} 
\newcommand{\Z}{\mathbb{Z}} 
\newcommand{\R}{{\mathbb R}} 
\newcommand{\C}{{\mathbb C}} 
\DeclareMathOperator{\I}{\mathbbm{1}} 
\newcommand{\caH}{{\mathcal H}}
\newcommand{\caO}{{\mathcal O}}
\def\subsection{\@startsection{subsection}{2}%
	\z@{.5\linespacing\@plus.7\linespacing}{.5\linespacing}%
	{\normalfont\scshape\centering}}
\def\subsubsection{\@startsection{subsubsection}{2}%
	\z@{.5\linespacing\@plus.7\linespacing}{.5\linespacing}%
	{\normalfont\scshape\centering}}
\numberwithin{equation}{section}
\begin{document}

\title[AC edge spectrum of topological insulators with odd time-reversal symmetry]{Absolutely continuous edge spectrum of topological insulators with an odd time-reversal symmetry}

\nopagebreak

\author{Alex Bols${}^1$}
\email{\href{mailto:alex-b@math.ku.dk}{alex-b@math.ku.dk}}
\address{${}^1${QMATH}, Department of Mathematical Sciences, University of Copenhagen, Universitetsparken 5, 2100 Copenhagen, Denmark}
\author{Christopher Cedzich${}^2$}
\email{\href{mailto:cedzich@hhu.de}{cedzich@hhu.de}}
\address{${}^2$Quantum Technology Group, Heinrich Heine Universit\"at D\"usseldorf, Universit\"atsstr. 1, 40225 D\"usseldorf, Germany}

\begin{abstract}
	We show that non-trivial two-dimensional topological insulators protected by an odd time-reversal symmetry have absolutely continuous edge spectrum. The proof employs a time-reversal symmetric version of the Wold decomposition that singles out ballistic edge modes of the topological insulator.
\end{abstract}

\maketitle


\section{Introduction}

Hall insulators support ballistic chiral edge modes \cite{PhysRevB.25.2185}. In a free electron description, these edge modes are associated to absolutely continuous spectrum filling the bulk gap. The presence of absolutely continuous spectrum has been proven for the Landau Hamiltonian with weak disorder and a steep edge potential or appropriate half-plane boundary conditions using Mourre estimates \cite{MacrisMartinPule1999, FrohlichGrafWalcher2000, DeBievrePule2002, HislopSoccorsi2008, BrietHislopRaikovSoccorsi2009}, and recently for Hall insulators on the lattice using index theory \cite{BolsWerner2021}.

The question naturally arises whether such ballistic modes are also present in topological insulators that are protected by an odd time-reversal symmetry, and for which the Hall conductance vanishes. Because of the time-reversal symmetry any left moving edge mode has a companion right moving edge mode so Mourre estimates, which apply only when the edge modes are strictly chiral, cannot be used to answer this question. In this note we use index theory to show that absolutely continuous edge spectrum is a consequence of a non-trivial $\Z_2$-valued bulk index.

We appeal to the bulk-edge correspondence for time-reversal invariant topological insulators \cite{graf2013bulk, ShapiroEtAl2020, alldridge2020bulk, bols2021fredholm} which links the bulk index to an edge index associated to a time-reversal symmetric unitary acting on the edge modes. Inspired by \cite{AschBourgetJoye2020}, we prove a symmetric Wold decomposition for such unitaries which implies in particular that the absolutely continuous spectrum of this unitary covers the whole unit circle if the edge index is non-trivial. The Hamiltonian describing the system with edge is then shown to inherit this absolutely continuous spectrum.

\section{Setup and Results}

\subsection{Edge spectrum of time-reversal symmetric topological insulators}

We consider free electrons moving on the lattice $\Z^2$ modeled by a bulk Hamiltonian $H$ on $\ell^2(\Z^2, \C^n)$ that is exponentially local in the sense that for all $\vec x, \vec y \in \Z^2$
\begin{equation}
\norm{ P_{\vec x} H P_{\vec y}} \leq C \ed^{-\norm{\vec x - \vec y} / \xi}
\end{equation}
for some $C < \infty$, $\xi > 0$, where $P_{\vec x}$ denotes the projection onto the site at $\vec x$.

Moreover, we take $H$ to be invariant under an \embf{odd time-reversal symmetry}, \ie there is an anti-unitary operator $\tau$ with $\tau^2 = -\I$ such that $\tau H \tau^* = H$. The time-reversal symmetry is further assumed to act ``on-site'' meaning that $\tau X_i \tau^* = X_i$ for $i = 1, 2$ where $X_i$ denotes the position operator in the $i$-direction. We further assume that $H$ has a \embf{bulk gap}, \ie that for some open interval $\Delta\subset\R$
\begin{equation*}
\Delta \cap \sigma(H) = \emptyset.
\end{equation*}
For any $\mu \in \Delta$ we denote the Fermi projection by $P_F = \chi_{\leq \mu}(H)$. 

We define a unitary that models the insertion of a unit of magnetic flux at the origin:
\begin{equation} \label{eq:flux insertion}
U := \ed^{\iu \arg \vec X}
\end{equation}
with $\vec X = (X_1, X_2)$ the vector of position operators and $\arg (x_1, x_2) = \arctan(x_2 / x_1)$. 
It follows from the discussion around Lemma 1 in \cite{AizenmanGraf1998} that the difference $A_B := U P_F U^* - P_F$ is compact, so we can define a \embf{bulk index}
\begin{equation}
\ind_2^{B}(H, \Delta) := \dim \ker (A_B - \I) \,\,\, \mod \, 2.
\end{equation}
This bulk index for odd time-reversal invariant topological insulators was first defined in \cite{KomaKatsura2016}. It is a non-commutative extension of the Kane-Mele invariant \cite{KaneMele2005} and related to the index of a pair of projections introduced in \cite{AvronSeilerSimon1994}.

With the bulk Hamiltonian $H$ we associate a half-space Hamiltonian $\hat H$ on the half-space $\ell^2(\Z \times \N, \C^n)$ that is also exponentially local and agrees with the bulk Hamiltonian in the bulk in the sense that
\begin{equation}
\norm{ P_{\vec x} (\iota^{\dag} H \iota - \hat H) P_{\vec y}} \leq C \ed^{- \norm{\vec x - \vec y}/\xi   - y_2 / \xi' }
\end{equation}
for some $C < \infty, \xi, \xi' > 0$ and all $\vec x, \vec y \in \Z \times \N$. Here $\iota:\ell^2(\Z \times \N, \C^n) \to\ell^2(\Z \times \Z, \C^n)$ denotes the injection that is induced by the natural inclusion of the half-space lattice $\Z \times \N$ in the bulk lattice $\Z \times \Z$. 
Since the time-reversal symmetry $\tau$ acts on-site, it naturally restricts to the half-space and we denote this restriction also by $\tau$. We assume that the half-space Hamiltonian is also time-reversal invariant, \ie $\tau \hat H \tau^* = \hat H$. 

The main result of the paper is the following:
\begin{thm} \label{thm:edge spectrum}
	If $\ind_{2}^B(H, \Delta) = 1$ then
	\begin{equation}
	\Delta \subset \sigma_{ac}(\hat H),
	\end{equation}
	\ie the half-space Hamiltonian $\hat H$ has absolutely continuous spectrum everywhere in the bulk gap.
\end{thm}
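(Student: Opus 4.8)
The plan is to exploit the bulk–edge correspondence to transfer the non-triviality of the $\Z_2$ bulk index to an edge unitary, then analyze that unitary's spectral structure via a time-reversal symmetric Wold decomposition, and finally pull the resulting absolutely continuous spectrum back to the half-space Hamiltonian $\hat H$. Concretely, fix $\mu \in \Delta$ and form the flux-insertion unitary $U$ restricted to the half-space; conjugating the half-space Fermi projection $\hat P_F = \chi_{\leq \mu}(\hat H)$ by $U$ and projecting appropriately produces a unitary $\hat U$ on $\Ran \hat P_F$ (morally $\hat P_F U \hat P_F$ modulo compacts), which commutes with the restricted time-reversal $\tau$ up to compacts and whose Fredholm/$\Z_2$ index equals $\ind_2^B(H,\Delta)$ by the bulk-edge theorem cited in the introduction. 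So under the hypothesis this edge unitary has non-trivial $\Z_2$ index.

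Next I would prove the \emph{symmetric Wold decomposition}: any unitary $W$ on a Hilbert space carrying an anti-unitary $\tau$ with $\tau^2 = -\I$, $\tau W \tau^* = W$ (possibly only up to trace-class/compact corrections, which must be handled carefully), decomposes the space into a part on which $W$ acts as a bilateral shift of some (even) multiplicity, a part carrying the ``symmetric'' pieces, and a reductive/unitary remainder — in the spirit of \cite{AschBourgetJoye2020}. The key point is that a non-trivial $\Z_2$ index forces the shift part to be nonempty, and on a bilateral shift summand the spectrum is purely absolutely continuous and covers the entire unit circle. Hence $\hat U$ has $\sigma_{ac}(\hat U) = S^1$. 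I expect this to require some care in separating the compact perturbation: one wants the decomposition to survive modulo compacts, or alternatively to first replace $\hat U$ by an exactly time-reversal symmetric unitary in the same index class and argue that the absolutely continuous spectrum of the genuine operator is still all of $S^1$ because the index obstruction is stable.

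The final step is to transfer $\sigma_{ac}(\hat U) = S^1$ to $\Delta \subset \sigma_{ac}(\hat H)$. Here I would use that $\hat U$ is built from $\hat P_F$, which is a spectral projection of $\hat H$ at energy $\mu$, and that the flux insertion $U$ is a local unitary; a Helffer–Sjöstrand / functional-calculus argument shows that the spectral measure of $\hat H$ near $\mu$ controls the spectral measure of $\hat U$ near $1$, so absolutely continuous spectrum of $\hat U$ accumulating at $1$ implies absolutely continuous spectrum of $\hat H$ at $\mu$. Since $\mu \in \Delta$ was arbitrary and the argument is uniform in $\mu$, one concludes $\Delta \subset \sigma_{ac}(\hat H)$. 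The main obstacle, I anticipate, is making the symmetric Wold decomposition rigorous in the presence of the compact errors coming from exponential locality rather than exact symmetry — i.e.\ showing that ``approximately time-reversal symmetric unitary with non-trivial index'' still forces a genuine bilateral-shift summand, and thus honest absolutely continuous spectrum, rather than merely dense point spectrum mimicking it.
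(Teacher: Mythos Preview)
Your overall three-step plan (bulk-edge correspondence $\to$ symmetric Wold $\to$ spectral transfer) matches the paper, but the concrete realization of the edge object is off and this breaks both the first and the last step. You propose to restrict the flux-insertion unitary $U=\ed^{\iu\arg\vec X}$ to the half-space and compress it with the half-space Fermi projection $\hat P_F=\chi_{\leq\mu}(\hat H)$. This is problematic: since the whole point is that $\hat H$ has spectrum throughout $\Delta$, the projection $\hat P_F$ is \emph{not} exponentially local (its integral kernel need not decay), so there is no reason for $[U,\hat P_F]$ to be compact, and the compressed operator $\hat P_F U\hat P_F$ need not be essentially unitary. Moreover you never say which projection your $\hat U$ is paired with; the $\Z_2$ index in this setting is an invariant of a pair (unitary, projection), not of a unitary alone, and the bulk-edge correspondence cited in the paper does not relate the bulk index to any index built from $(U,\hat P_F)$ on the half-space. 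Finally, your symmetry relation is misstated: time-reversal acts on the edge unitary by $\tau(\,\cdot\,)\tau^*=(\,\cdot\,)^*$, not $\tau(\,\cdot\,)\tau^*=(\,\cdot\,)$, and the Wold theorem you need is for that anti-commuting relation.

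The paper avoids all of these issues by choosing a different edge unitary: it takes $U_E:=W_g(\hat H)=\ed^{2\pi\iu g(\hat H)}$, where $g$ is a smooth switch function with $g'$ supported in $\Delta$, and pairs it with the \emph{position} projection $\hat\Pi_1$ onto $\{x_1\geq0\}$. Then $[U_E,\hat\Pi_1]$ is trace-class by locality (Elbau--Graf), the symmetry $\tau U_E\tau^*=\overline{W_g}(\hat H)=U_E^*$ and $\tau\hat\Pi_1\tau^*=\hat\Pi_1$ hold \emph{exactly} (so your worries about ``approximate time-reversal'' evaporate), and bulk-edge correspondence gives $\dim\ker(U_E\hat\Pi_1 U_E^*-\hat\Pi_1-\I)\equiv 1\ \mod 2$. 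The symmetric Wold decomposition then yields $\sigma_{\ac}(U_E)=U(1)$, and---this is the payoff of the construction---since $U_E$ is literally a function of $\hat H$, the spectral mapping theorem transfers this immediately to $\Delta\subset\sigma_{\ac}(\hat H)$, with no Helffer--Sj\"ostrand gymnastics needed. In short, the missing idea is to build the edge unitary as a function of $\hat H$ and pair it with a position projection, rather than trying to reuse the bulk flux-insertion picture on the half-space.
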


\begin{remark}
	Together with \cite{BolsWerner2021} this covers all non-trivial cases of two-dimensional free-fermion topological insulators in the periodic table of topological insulators \cite{kitaevPeriodic}.
\end{remark}

\subsection{Time-reversal symmetric Wold decomposition} \label{sec:symmetric Wold statement}

The proof of Theorem \ref{thm:edge spectrum} builds on the following time-reversal symmetric version of Theorem 2.1 of \cite{AschBourgetJoye2020} which is of independent interest:
\begin{thm} \label{thm:symmetric Wold}
	Let $U$ be a unitary and $P$ a projection such that $A = U P U^* - P$ is compact. Moreover, assume that $\tau U \tau^* = U^*$ and $\tau P \tau^* = P$ for an odd time-reversal symmetry $\tau$. Then there exists a unitary $W$ with $U - W$ compact and $\tau W \tau^* = W^*$ such that:
	\begin{itemize}
		\item If $\dim(A - \I) \,\,\, \mod \,2 = 0$, then $[W, P] = 0$.
		\item If $\dim(A - \I) \,\,\, \mod \,2 = 1$, then $W = S \oplus W_\triv$ where $[W_\triv, P] = 0$ and the unitary $S$ which we define in Eq. \eqref{eq:shiftau} consists of two opposite shift operators and therefore has absolutely continuous spectrum covering the whole unit circle.
	\end{itemize}
	Moreover, if $A$ is Schatten-$p$, then so is $U - W$.
\end{thm}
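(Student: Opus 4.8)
The starting point is the non-symmetric Wold decomposition of Asch–Bourget–Joye (Theorem 2.1 of \cite{AschBourgetJoye2020}): since $A = UPU^* - P$ is compact, one can find a unitary $W_0$ with $U - W_0$ compact and $[W_0, P] = 0$ after possibly splitting off finitely many "shift" blocks. The plan is to run this argument $\tau$-equivariantly. First I would set up the relevant Hilbert spaces: on the range $P\caH$ the compression $PUP$ is a contraction whose defect space records the failure of $U$ to commute with $P$; compactness of $A$ forces these defects to be finite-dimensional. The anti-unitary $\tau$ with $\tau U\tau^* = U^*$, $\tau P\tau^* = P$ maps $P\caH$ to itself and intertwines $PUP$ with $(PUP)^* $ composed appropriately, so it acts on the defect spaces, exchanging the "incoming" and "outgoing" defect spaces of $U$ and $U^*$. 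Because $\tau^2 = -\I$, on any finite-dimensional $\tau$-invariant space $\tau$ endows it with a quaternionic structure, so its dimension is even.

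Next I would build $W$ block by block. The Wold/von Neumann–Wold decomposition of the isometry $PUP$ on $P\caH$ (modulo compacts) splits $P\caH$ into a part where $U$ already commutes with $P$ and a finite number of one-sided shift blocks $\ell^2(\N)$. The key point is that $\tau$ pairs a shift coming from the "left defect" with an adjoint shift ("right-moving" versus "left-moving" mode), because $\tau$ swaps $U \leftrightarrow U^*$ and hence a forward shift with a backward shift. So these shift blocks come in $\tau$-conjugate pairs $\ell^2(\N)\oplus\ell^2(\N)$ on which $\tau$ acts off-diagonally with $\tau^2 = -\I$; on such a pair I can assemble the genuinely symmetric two-shift unitary $S$ of Eq.~\eqref{eq:shiftau}. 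A single unpaired shift can never occur for parity reasons unless the total defect index is odd; and $\dim\ker(A - \I)$ counts exactly this defect index modulo $2$ (this is precisely the quantity entering the bulk index, as in \cite{AvronSeilerSimon1994, AschBourgetJoye2020}). Hence: if $\dim\ker(A-\I)$ is even, all shifts pair up and can be absorbed — I then choose $W$ commuting with $P$ on the nose; if it is odd, exactly one shift block survives after pairing, giving the orthogonal summand $S$, with the rest assembled into $W_\triv$ commuting with $P$.

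The main obstacle, and where I expect to spend the most care, is the equivariant bookkeeping: one must perform the Wold decomposition so that $\tau$ is respected at every stage, i.e. choose the orthonormal bases of the shift blocks compatibly with the quaternionic structure, and verify that the "trivial" (commuting) part can genuinely be arranged to carry $\tau W_\triv\tau^* = W_\triv^*$ rather than only up-to-compacts. Concretely I would (i) isolate the essential spectrum issues so that everything reduces to a finite-rank correction, (ii) use that $\tau$ acts on the finite-dimensional correction space with $\tau^2 = -\I$ to diagonalize the pairing of defect subspaces, and (iii) patch the corrected unitary back together, checking that each modification changes $U$ only by a compact (resp. Schatten-$p$, since the defects inherit the Schatten class of $A$) operator. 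The Schatten-$p$ claim then follows because all modifications are finite-rank except the reshuffling inside blocks, which is controlled by $A$ itself.
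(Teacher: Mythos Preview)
Your strategy of starting from the ABJ decomposition and then ``pairing shift blocks by $\tau$'' has a conceptual gap. Under the time-reversal hypothesis the ordinary integer index already vanishes: with $\tilde\tau:=U\tau$ one has $\tilde\tau^2=-\I$ and $\tilde\tau A\tilde\tau^*=-A$, so $\dim\ker(A-\I)=\dim\ker(A+\I)$. Hence the non-symmetric theorem of \cite{AschBourgetJoye2020}, applied as a black box, produces a $W_0$ with $[W_0,P]=0$ and \emph{no} residual shift summand --- there is nothing left to pair, and the $\Z_2$ obstruction is invisible. The actual question is whether such a decoupled $W_0$ can be chosen with $\tau W_0\tau^*=W_0^*$, and your outline supplies no mechanism for this beyond ``equivariant bookkeeping''. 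Relatedly, $PUP$ is not an isometry, so invoking the Wold decomposition of ``the isometry $PUP$'' is not a legal move; and the spectral subspaces of $A$ are not $\tau$-invariant (only $\tilde\tau$ maps $E_\lambda$ to $E_{-\lambda}$), so your appeal to a quaternionic structure induced by $\tau$ on the defect spaces does not go through as stated.

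The paper does not symmetrise ABJ post hoc but builds a decoupler that is $\tilde\tau$-symmetric from the outset. With $Q=UPU^*$ and $B=\I-P-Q$, the explicit normal operator $X=\I-P-Q+2PQ$ satisfies $PX=XQ$, $\tilde\tau X\tilde\tau^*=X^*$ and $X^*X=B^2$; its unitary phase $(X^*X)^{-1/2}X$ gives a symmetric decoupler on $(E_{+1}\oplus E_{-1})^\perp$. On $E_{+1}\oplus E_{-1}$ one must swap $E_{+1}$ with $E_{-1}$ by a unitary $v$ obeying $\tilde\tau v\tilde\tau^*=v^*$; since $\tilde\tau$ already interchanges these spaces, such a $v$ exists exactly when $\dim E_{+1}$ is even, and otherwise a rank-two piece $\Pi_+-\Pi_-$ remains. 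The bilateral shift $S$ is then obtained by iterating $W$ on this two-dimensional seed. A further point your sketch misses: the fact that only $E_{\pm1}$ can carry the obstruction requires showing $\dim E_\lambda$ is even for $\lambda\ne0,\pm1$, and this uses the \emph{composite} odd anti-unitary $(B^*B)^{-1/2}B\,\tilde\tau$ on $E_\lambda$, combining $\tilde\tau:E_\lambda\to E_{-\lambda}$ with the spectral symmetry $B:E_{-\lambda}\to E_\lambda$; neither $\tau$ nor $\tilde\tau$ alone gives a quaternionic structure on $E_\lambda$.
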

Since the absolutely continuous spectrum of an operator is stable under trace-class perturbations, Theorem \ref{thm:symmetric Wold} immediately implies
\begin{cor}\label{cor:absolutely continuous spectrum}
	Let $U$ be a unitary and $P$ a projection such that $A = U P U^* - P$ is trace-class. Moreover, assume that $\tau U \tau^* = U^*$ and $\tau P \tau^* = P$ for an odd time-reversal symmetry $\tau$. Then, if $\dim \ker (A - \I) \,\,\, \mod \, 2 = 1$ the absolutely continuous spectrum of $U$ covers the whole unit circle.
\end{cor}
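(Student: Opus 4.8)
The plan is to obtain the corollary as an essentially immediate consequence of Theorem \ref{thm:symmetric Wold} combined with the invariance of the absolutely continuous spectrum under trace-class perturbations; I do not expect any substantial analytic content beyond what Theorem \ref{thm:symmetric Wold} already provides.

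First I would observe that a trace-class $A = UPU^* - P$ is in particular compact, so the hypotheses of Theorem \ref{thm:symmetric Wold} are met and it produces a unitary $W$ with $\tau W \tau^* = W^*$ and $U - W$ compact; since $A$ is moreover Schatten-$1$, the final assertion of that theorem upgrades this to $U - W$ trace-class. Next, because $\dim \ker (A - \I) \,\mod\, 2 = 1$, we are in the second alternative of Theorem \ref{thm:symmetric Wold}, so $W = S \oplus W_\triv$ with $S$ the unitary built from two opposite shift operators.

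The remaining steps are spectral bookkeeping. I would use that the absolutely continuous subspace of a direct sum is the direct sum of the absolutely continuous subspaces of the summands, so that $\sigma_{ac}(W) = \sigma_{ac}(S) \cup \sigma_{ac}(W_\triv) \supseteq \sigma_{ac}(S)$; since $\sigma_{ac}(S)$ is the whole unit circle by Theorem \ref{thm:symmetric Wold} while $\sigma_{ac}(W) \subseteq \sigma(W) \subseteq \mathbb{S}^1$ because $W$ is unitary, this forces $\sigma_{ac}(W) = \mathbb{S}^1$. Finally, since $U - W$ is trace-class, the Kato--Rosenblum theorem in its form for unitary operators (equivalently, applied after a Cayley transform to the associated self-adjoint operators) shows that the absolutely continuous parts of $U$ and $W$ are unitarily equivalent, whence $\sigma_{ac}(U) = \sigma_{ac}(W) = \mathbb{S}^1$.

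The only points where I would be slightly careful are invoking the unitary version of the Kato--Rosenblum theorem rather than the familiar self-adjoint one, and applying the direct-sum decomposition of the absolutely continuous subspace correctly; neither is a genuine obstacle. In short, all the real work sits in Theorem \ref{thm:symmetric Wold}, and this corollary is a two-line deduction from it.
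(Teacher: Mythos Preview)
Your proposal is correct and matches the paper's own argument essentially verbatim: the paper simply remarks that the absolutely continuous spectrum is stable under trace-class perturbations and declares the corollary an immediate consequence of Theorem~\ref{thm:symmetric Wold}. Your write-up just spells out the two implicit steps (the direct-sum behaviour of $\sigma_{ac}$ and the unitary Kato--Rosenblum theorem) more carefully than the paper does.
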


The unitary $S$ appearing in the statement of Theorem \ref{thm:symmetric Wold} is defined as follows: Consider the Hilbert space $\ell^2(\Z, \C^2) = \ell^2(\Z) \otimes \C^2$ with orthonormal basis $\{\ket{x,\pm}\}_{x \in \Z}$ labeled by position and $\sigma_z$-eigenstates. On this Hilbert space, consider the odd time-reversal symmetry
\begin{equation}
	\tau = \bigoplus_{x \in \Z} \begin{bmatrix} 0 && -1 \\ 1 && 0 \end{bmatrix} K,
\end{equation}
where $K$ is complex conjugation with respect to the basis $\{\ket{x,\pm}\}_{x \in \Z}$. Then $S$ acts as the right shift on the spin-up sector and the left shift on the spin-down sector, \ie
\begin{equation}\label{eq:shiftau}
	S\ket{x,\pm}=\ket{x\pm1,\pm}
\end{equation}
for all $x \in \Z$. Since $S$ contains two copies of the shift operator, it has absolutely continuous spectrum covering the whole unit circle. Moreover, it is straightforward to verify that the $\tau$ above is an odd time-reversal symmetry for $S$, i.e. that  $\tau S \tau^* = S^*$.

\section{Proof of absolute continuity of edge spectrum}

\subsection{Edge index and bulk-edge correspondence}

Let $g : \R \rightarrow [0, 1]$ be a smooth non-increasing function interpolating from $1$ to $0$ such that its derivative is supported in the bulk gap $\Delta$. 
We have
\begin{equation}
	P_F = g(H).
\end{equation}
Consider now the edge unitary $U_E := W_g(\hat H)$ where $W_g$ is the function
\begin{equation} \label{def:W_g}
	W_g : \R \rightarrow \C : x \mapsto \ed^{2 \pi \iu g(x) }.
\end{equation}
This unitary is local and supported near the edge of the half-space. We denote by $\hat \Pi_1$ the projection on the upper right quadrant $\{ \vec x \in \Z \times \N \, | \, x_1 \geq 0 \}$, then
\begin{lem}[\cite{ElbauGraf2002}] \label{lem:W_g_locality}
	The commutator $[U_E, \hat \Pi_1 ]$ is trace class.
\end{lem}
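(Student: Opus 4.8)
The plan is to show that $[U_E, \hat\Pi_1]$ is trace class by exploiting the locality of the function $W_g$ evaluated on the exponentially local half-space Hamiltonian $\hat H$. The starting point is that $W_g(x) = \ed^{2\pi\iu g(x)}$ is a smooth function which is constant ($=1$) outside the compact interval $\overline{\Delta}$, so $W_g(x) - 1$ is a smooth compactly supported function on $\R$. Writing $f := W_g - 1$, we have $U_E - \I = f(\hat H)$, and $[U_E, \hat\Pi_1] = [f(\hat H), \hat\Pi_1]$, so it suffices to control this commutator.

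First I would obtain an explicit quasi-local representation of $f(\hat H)$ using a Helffer--Sj\"ostrand-type formula, writing $f(\hat H) = \frac{1}{\pi}\int_{\C} \bar\partial \tilde f(z)\, (\hat H - z)^{-1}\, dx\, dy$ for an almost-analytic extension $\tilde f$ of $f$ whose $\bar\partial$-derivative vanishes to high order on the real axis. The commutator then becomes $[f(\hat H), \hat\Pi_1] = -\frac{1}{\pi}\int_{\C}\bar\partial\tilde f(z)\,(\hat H - z)^{-1}[\hat H, \hat\Pi_1](\hat H - z)^{-1}\, dx\, dy$. The key input is that $[\hat H, \hat\Pi_1]$ is itself quasi-local and ``supported near the vertical line $x_1 = 0$'': by exponential locality of $\hat H$, the matrix elements $P_{\vec x}[\hat H,\hat\Pi_1]P_{\vec y}$ decay exponentially in $\norm{\vec x - \vec y}$ and also in $\min(\abs{x_1}, \abs{y_1})$, which is exactly the mechanism that makes commutators with the half-space projection trace class rather than merely bounded.

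Next I would estimate the trace norm. Sandwiching $[\hat H,\hat\Pi_1]$ between two resolvents $(\hat H - z)^{-1}$, which for $z = E + \iu\eta$ in the region where $\bar\partial\tilde f$ lives are bounded by $\abs{\eta}^{-1}$ and are themselves quasi-local with a range $\sim \abs{\eta}^{-1}$, one gets an operator whose matrix elements decay exponentially away from the line $x_1 = 0$ and polynomially (controlled by $\abs\eta$) in the transverse directions. Summing these matrix elements — the sum over $x_1 \in \Z$ converges because of the exponential decay in $\abs{x_1}$, the sum over the remaining coordinates converges with a power of $\abs\eta^{-1}$ — yields a trace-norm bound of the form $\norm{(\hat H - z)^{-1}[\hat H,\hat\Pi_1](\hat H - z)^{-1}}_1 \leq C \abs\eta^{-N}$ for some fixed $N$. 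Choosing the almost-analytic extension $\tilde f$ so that $\abs{\bar\partial\tilde f(z)} \leq C_N \abs{\mathrm{Im}\, z}^{N}$ near the real axis makes the $z$-integral absolutely convergent in trace norm, giving $[U_E,\hat\Pi_1] \in \mathcal{L}^1$.

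The main obstacle is the bookkeeping in the trace-norm estimate: one must simultaneously track the exponential decay coming from $\hat H$ and its resolvents (which controls the sums along and across the edge and makes the operator trace class) against the resolvent blow-up $\abs\eta^{-1}$ near the real axis (which must be absorbed by the vanishing of $\bar\partial\tilde f$). Getting these two competing scales to balance — i.e. verifying that the exponential-locality estimates survive the two resolvent insertions with only polynomial loss in $\abs\eta$ — is the technical heart of the argument. Since this is essentially the estimate carried out in \cite{ElbauGraf2002}, I would either invoke their result directly or reproduce the Helffer--Sj\"ostrand computation above with the quasi-locality bounds made explicit; either way the Schatten class is $\mathcal{L}^1$ because only a single commutator with $\hat\Pi_1$ appears, contributing one ``codimension-one'' constraint.
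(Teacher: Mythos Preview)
Your fallback of citing \cite{ElbauGraf2002} directly is exactly what the paper does: its entire proof is the sentence ``This lemma follows immediately from Lemmas A.2 and A.3 in \cite{ElbauGraf2002}.'' So at that level you agree with the paper.

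However, the self-contained Helffer--Sj\"ostrand sketch you give before that has a genuine gap. Your argument produces exponential decay in $\abs{x_1}$ (from the commutator with $\hat\Pi_1$) and in $\norm{\vec x - \vec y}$ (from locality of $\hat H$ and its resolvents), but there is \emph{no} mechanism in what you wrote that gives any decay in $x_2$. The operator $[\hat H,\hat\Pi_1]$ is supported near the vertical line $x_1=0$ but extends over all of $x_2\in\N$, and sandwiching by resolvents does not change that. Your claim that ``the sum over the remaining coordinates converges with a power of $\abs\eta^{-1}$'' is therefore false: for fixed $z$ the trace norm of $(\hat H-z)^{-1}[\hat H,\hat\Pi_1](\hat H-z)^{-1}$ is infinite, and your ``codimension-one'' heuristic is off by one---a single commutator with a half-space projection in two dimensions leaves a one-dimensional strip, which is not trace class.

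The ingredient you never invoke is that $f=W_g-1$ is supported in the \emph{bulk} gap $\Delta$. Combined with the hypothesis that $\hat H$ agrees with the gapped bulk Hamiltonian $H$ away from the edge, this forces $f(\hat H)=U_E-\I$ to be exponentially localized near $x_2=0$; this is precisely the content of one of the Elbau--Graf lemmas. Only once you have that edge confinement does the additional confinement near $x_1=0$ from the commutator with $\hat\Pi_1$ localize the operator near the corner and yield a trace-class estimate. Your sketch, as written, would apply verbatim to any smooth compactly supported $f$ regardless of the bulk gap, and in that generality the conclusion is false.
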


This lemma follows immediately from Lemmas A.2. and A.3. in \cite{ElbauGraf2002}. It follows that $A_E := U_E \hat \Pi_1 U_E^* - \hat \Pi_1 = [U_E, \hat \Pi_1] U_E^*$ is also trace class and in particular compact so the index
\begin{equation}
	\ind_2^{E}(H, \Delta) := \dim \ker (A_E - \I) \,\,\, \mod \, 2
\end{equation}
is well defined. We call this the \embf{edge index}.

By bulk-boundary correspondence, the bulk and edge indices are equal:

\begin{thm}[Theorem 2.11 of \cite{ShapiroEtAl2020}] \label{thm:bulk-edge}
	Under the above assumptions on $H$ we have
	\begin{equation}
		\ind_2^B(H, \Delta) = \ind_2^E(H, \Delta).
	\end{equation}
\end{thm}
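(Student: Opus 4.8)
The plan is to establish Theorem~\ref{thm:bulk-edge} along the operator-algebraic lines of the integer quantum Hall bulk--edge correspondence (Elbau--Graf, Kellendonk--Richter--Schulz-Baldes), but carried out $\tau$-equivariantly so that the integer Noether/Laughlin indices are replaced by their reductions modulo~$2$. The backbone is a short exact sequence of C*-algebras encoding the half-space geometry. Let $\mathcal B$ be a C*-algebra of exponentially local operators on $\ell^2(\Z^2,\C^n)$ containing $H$, let $\hat{\mathcal T}$ be the analogous half-space algebra on $\ell^2(\Z\times\N,\C^n)$ containing $\hat H$ and the corner projection $\hat\Pi_1$, and let $\mathcal J\subset\hat{\mathcal T}$ be the ideal of operators whose matrix elements decay in the $x_2$-direction, i.e.\ that are localized near the edge $\{x_2=0\}$. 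The inclusion $\iota$ induces an evaluation $\mathrm{ev}\colon\hat{\mathcal T}\to\mathcal B$ with $\hat H\mapsto H$, whose kernel is exactly $\mathcal J$ by the assumed exponential decay of $\iota^{\dag}H\iota-\hat H$ away from the edge; this yields $0\to\mathcal J\to\hat{\mathcal T}\xrightarrow{\ \mathrm{ev}\ }\mathcal B\to0$. Because $\tau$ acts on-site it restricts to all three algebras, equipping them with a compatible Real structure --- in fact Quaternionic, since $\tau^2=-\I$ --- for which the sequence is equivariant.

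The second step identifies the connecting map on the relevant classes. The function $g(\hat H)\in\hat{\mathcal T}$ is a self-adjoint lift of $P_F$, because $\mathrm{ev}(g(\hat H))=g(H)=P_F$. Since $g$ is constant ($0$ or $1$) off the gap $\Delta$, the unitary $U_E=\ed^{2\pi\iu g(\hat H)}$ differs from $\I$ only on the spectral subspace of $\hat H$ lying in $\Delta$, which by exponential locality is supported near the edge; hence $U_E\in\mathcal J^{+}$. The exponential map of the six-term sequence in $K$-theory (with its Real/Quaternionic refinement) therefore sends $[P_F]\in K_0(\mathcal B)$ to $[U_E]\in K_1(\mathcal J)$, $\tau$-equivariantly; in particular the $\Z_2$-valued secondary invariant carried by $[P_F]$ is mapped onto the $\Z_2$-valued secondary invariant carried by $[U_E]$.

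The third and crucial step is to recognize $\ind_2^B$ and $\ind_2^E$ as exactly these two $\Z_2$-invariants, and to use that the $K$-theory boundary map $\delta$ and the corresponding $K$-homology boundary map are adjoint under the index pairing --- a duality that persists in the Real/Quaternionic setting. Indeed, $\ind_2^B(H,\Delta)=\dim\ker(UP_FU^{*}-P_F-\I)\bmod2$ is the $\tau$-protected mod-$2$ reduction of the pairing of $[P_F]$ with the odd Fredholm module on $\ell^2(\Z^2,\C^n)$ determined by $U=\ed^{\iu\arg\vec X}$, i.e.\ with the K-homology class of the puncture at the origin, while $\ind_2^E(H,\Delta)=\dim\ker(U_E\hat\Pi_1U_E^{*}-\hat\Pi_1-\I)\bmod2$ is the mod-$2$ reduction of the pairing of $[U_E]$ with the Fredholm module determined by the edge cut $\hat\Pi_1$. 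These two geometric classes are exchanged by the boundary map dual to $\mathrm{ev}$, so by naturality of the pairing $\langle\delta[P_F],\,\cdot\,\rangle=\langle[P_F],\,\partial(\cdot)\rangle$ the two $\Z_2$-numbers coincide. This can also be seen by a direct comparison: using exponential locality, the decay of $\iota^{\dag}H\iota-\hat H$ off the edge, and Lemma~\ref{lem:W_g_locality} together with its analogue for $U$ and $\hat\Pi_1$, one matches the two finite-rank eigenprojections at eigenvalue $1$ up to $\tau$-equivariant trace-class terms, which --- by the same Kramers-type mechanism that makes $\dim\ker(\,\cdot-\I)\bmod2$ well defined in the first place --- change that dimension only by even amounts.

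The main obstacle is precisely this third step: establishing the compatibility of the two pairings with the connecting map in the genuinely $\Z_2$-valued setting. Unlike the integer Hall case, there is no trace or cyclic cocycle in terms of which both indices can be written and their equality exhibited by a direct cohomological identity, so one must argue homotopically or through Real $KK$-theory; and then the delicate point is to keep every intermediate operator inside the class of $\tau$-equivariant operators with the required trace-class/compactness properties --- it is exactly this constraint that upgrades the invariance from $\Z$ to $\Z_2$ and makes the statement genuinely depend on the odd time-reversal symmetry. A secondary technical nuisance is the corner geometry: $\hat\Pi_1$ cuts the one-dimensional edge into a half-line, so one is really pairing a one-dimensional edge unitary against a zero-dimensional endpoint, and the locality estimates that make $A_E$ trace class must be carried out in this corner configuration.
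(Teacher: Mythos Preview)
The paper does not prove Theorem~\ref{thm:bulk-edge}; it is quoted verbatim as Theorem~2.11 of \cite{ShapiroEtAl2020} and used as a black box in the proof of Theorem~\ref{thm:edge spectrum}. There is therefore no in-paper proof to compare your proposal against. The remark following the theorem indicates that in \cite{ShapiroEtAl2020} both indices are computed as odd Fredholm indices and shown to agree directly, which is a different route from the $K$-theoretic boundary-map argument you outline.

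As a standalone sketch, your proposal follows the well-trodden Toeplitz-extension route (Kellendonk--Richter--Schulz-Baldes, Prodan--Schulz-Baldes) lifted to the Real/Quaternionic category, and the overall architecture is sound. But you have been honest that it is a plan rather than a proof: the third step --- identifying $\ind_2^B$ and $\ind_2^E$ as the pairings of specific Real $K$-theory and $K$-homology classes, and then invoking compatibility of the boundary maps with the pairing in the $\Z_2$-valued setting --- is exactly where all the work lies, and you have not carried it out. The ``direct comparison'' you offer as an alternative (matching eigenprojections up to $\tau$-equivariant trace-class error) is a heuristic, not an argument: trace-class perturbations do \emph{not} in general preserve $\dim\ker(A-\I)\bmod 2$ unless the perturbation is itself structured by the symmetry in a precise way, and you have not specified that structure. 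So what you have is a reasonable roadmap toward one of the known proofs, but with the decisive technical step left open.
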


\begin{remark}\mbox{}
	\begin{itemize}
		\item The edge index may a priori depend on the boundary conditions defining the half-space Hamiltonian $\hat H$. The bulk-edge correspondence, Theorem \ref{thm:bulk-edge}, implies that this is not the case, justifying our notation $\ind_2^E(H, \Delta)$.
		\item In \cite{ShapiroEtAl2020}, the bulk and edge indices are given as (odd) Fredholm indices \cite{AtiyahSinger1969}, \cite{Schulz2015}. If $[U, P]$ is compact then $F = P U P + P^\perp$ is a Fredholm operator and its odd Fredholm index is $\dim \ker F \,\,\, \mod \, 2$. The equivalence to our $\dim \ker (A - \I) \,\,\, \mod \, 2$ with $A = U P U^* - P$ is easily established by noting that $\psi \in \ker F$ if and only if $U \psi \in \ker(A - \I)$.
	\end{itemize}
\end{remark}

\subsection{Proof of theorem \ref{thm:edge spectrum}}

The following is a verbatim copy of the corresponding proof in \cite{BolsWerner2021}, except for the remark that now $\sigma_{ac}(U_E) = U(1)$ follows, through Corollary \ref{cor:absolutely continuous spectrum}, from an odd time-reversal symmetry and a non-trivial $\Z_2$-valued edge index.

By Theorem \ref{thm:bulk-edge}, $\ind_2^B(H, \Delta) =1$ implies $\ind_2^E(H, \Delta) = \dim \ker (U_E \hat \Pi_1 U_E^* - \hat \Pi_1 - \I) \,\,\, \mod \, 2 
= 1$. Since $\tau$ acts locally, $\tau \hat \Pi_1 \tau^{*} = \hat \Pi_1$, and from $\tau \hat H \tau^* = \hat H$ we get $\tau U_E \tau^* = \tau W_g(\hat H) \tau^* = \overline W_g(\hat H) = U_E^*$. Moreover, $U_E \hat \Pi_1 U_E^* - \hat \Pi_1$ is trace-class by Lemma \ref{lem:W_g_locality}. It therefore follows from Corollary \ref{cor:absolutely continuous spectrum} that the absolutely continuous spectrum of the edge unitary $U_E$ is the whole unit circle.
	
We can choose $g$ in such a way that $x \mapsto W_g(x) := \ed^{2 \pi \iu g(x)}$ is a smooth function that satisfies $\Delta = W_g^{-1} \big(  U(1) {\setminus} \{1\}   \big)$ and such that $W_g$ is invertible on $\Delta$.

Now, let $P_{\Delta} = \chi_{\Delta}(\hat H)$ be the spectral projection of $\hat H$ on the interval $\Delta$. Since the absolutely continuous spectrum of $W_g(\hat H)$ covers the whole unit circle and $W_g$ differs from 1 only on $\Delta$, there is an absolutely continuous spectral measure $\mu_{\ac}$ of $W_g( P_{\Delta} \hat H P_{\Delta} )$ that is supported on the whole unit circle. By spectral mapping (Proposition 8.12 of \cite{Conway2019}), we have that $\mu_{\ac} \circ W_g$ is a spectral measure for $(W_g)|_{\Delta}^{-1} ( W_g( P_{\Delta} \hat H P_{\Delta} ) ) = P_{\Delta} \hat H P_{\Delta}$. The function $W_g$ is smooth and maps the interval $\Delta$ into the unit circle, so we see that $\mu_{\ac} \circ W_g$ is an absolutely continuous measure supported on the entire closed interval $\overline \Delta$. (Supports of measures are closed sets.) This means that $\sigma_{\ac}(P_{\Delta} \hat H P_{\Delta}) = \overline \Delta$, and hence, $\Delta \subset \sigma_{\ac}(\hat H)$ as required.
\hfill\qedsymbol

\section{Proof of the symmetric Wold decomposition}

In this section we prove Theorem \ref{thm:symmetric Wold}. 
We fix a unitary $U$ and a projection $P$ such that $A = U P U^* - P$ is compact. Moreover, we require $\tau U \tau^* = U^*$ and $\tau P \tau^* = P$ for some odd time-reversal symmetry $\tau$. We further write $Q = U P U^*$, and for $I \subset \R$ we denote by $E_I$ the range of the spectral projection $\chi_{I}(A)$ of $A$. We also write $E_{\lambda} = E_{\{\lambda\}}$ for the $\lambda$-eigenspace of $A$.

\subsection{Spectral symmetry and Kramers degeneracy}

Following \cite{AvronSeilerSimon1994} we introduce the operator $B = \I - P - Q$. One easily checks that
\begin{equation}
	A^2 + B^2 = \I, \quad \text{and} \quad AB + BA = 0.
\end{equation}

\begin{lem} \label{lem:spectral symmetry}
	For $\lambda \not\in \{-1, 0, +1\}$ the operator $B$ maps $E_{\lambda}$ isomorphically to $E_{-\lambda}$. In particular, $\dim E_{\lambda} = \dim E_{-\lambda}$.
\end{lem}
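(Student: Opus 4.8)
The plan is to exploit the anticommutation relation $AB + BA = 0$ together with $A^2 + B^2 = \I$. First I would observe that since $A$ and $B$ are bounded self-adjoint operators, the spectral subspace $E_\lambda = \ker(A - \lambda)$ is well-defined for each $\lambda \in \R$. The anticommutation relation gives, for any $\psi \in E_\lambda$, that $A(B\psi) = -BA\psi = -\lambda B\psi$, so $B$ maps $E_\lambda$ into $E_{-\lambda}$; symmetrically $B$ maps $E_{-\lambda}$ into $E_\lambda$. Thus $B$ restricts to operators $B : E_\lambda \to E_{-\lambda}$ and $B : E_{-\lambda} \to E_\lambda$, and the composition $B^2 : E_\lambda \to E_\lambda$.

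Next I would use $A^2 + B^2 = \I$ to compute this composition: on $E_\lambda$ we have $B^2 = \I - A^2 = (1 - \lambda^2)\I$. For $\lambda \notin \{-1, 0, +1\}$ the scalar $1 - \lambda^2$ is nonzero (and also $\lambda \neq 0$ ensures $E_\lambda$ and $E_{-\lambda}$ are genuinely distinct eigenspaces), so $B|_{E_\lambda}$ is invertible with inverse $(1-\lambda^2)^{-1} B|_{E_{-\lambda}}$. Hence $B : E_\lambda \to E_{-\lambda}$ is a bijection, indeed a scalar multiple of a unitary after normalization, giving the claimed isomorphism and in particular $\dim E_\lambda = \dim E_{-\lambda}$.

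There is essentially no serious obstacle here; the only point requiring a little care is to confirm that $B$ genuinely maps the closed subspace $E_\lambda$ into $E_{-\lambda}$ rather than merely a dense subspace — but since $E_\lambda$ is the full eigenspace $\ker(A-\lambda)$ and the identity $A(B\psi) = -\lambda(B\psi)$ is exact for every $\psi \in E_\lambda$, this is immediate, with no limiting argument needed. One should also note that the values $\lambda = 0, \pm 1$ must genuinely be excluded: at $\lambda = 0$ the subspaces $E_\lambda$ and $E_{-\lambda}$ coincide so the statement is vacuous, and at $\lambda = \pm 1$ we get $B^2 = 0$ on $E_\lambda$, so $B$ need not be injective there. (The behavior at these exceptional eigenvalues, together with the Kramers degeneracy forced by the odd time-reversal symmetry $\tau$, is presumably what the rest of this subsection addresses.)
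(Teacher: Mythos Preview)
Your proof is correct and follows essentially the same route as the paper: use $AB = -BA$ to see that $B$ maps $E_\lambda$ into $E_{-\lambda}$, then use $B^2 = \I - A^2 = (1-\lambda^2)\I$ on $E_\lambda$ to conclude bijectivity. The only minor difference is that the paper argues injectivity in both directions and then invokes compactness of $A$ (hence finite-dimensionality of $E_\lambda$) to upgrade to an isomorphism, whereas you exhibit the explicit inverse $(1-\lambda^2)^{-1}B$ and so do not need compactness at all; your version is slightly cleaner in this respect.
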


\begin{proof}
	Let $A \phi = \lambda \phi$ for $\lambda \not\in \{-1, 0, +1\}$. Since $AB = - BA$ we have
	\begin{equation}
		AB \phi = - BA \phi = -\lambda \phi
	\end{equation}
	and since $B^2 = \I - A^2$ we have that $B^2 \phi = (1 - \lambda^2) \phi \neq 0$, because $\lambda \not\in \{-1, +1\}$. Thus $B \phi \neq 0$, and $B$ maps  $E_{\lambda}$ injectively into $E_{-\lambda}$ and vice versa. Since $A$ is compact, $E_{\lambda}$ is finite dimensional and it follows that $B$ maps $E_{\lambda}$ isomorphically to $E_{-\lambda}$
\end{proof}

Define $\tilde \tau = U \tau$ which satisfies $\tilde \tau^2 = U \tau U \tau = U U^* \tau^2 = - \I$. From straightforward calculations we get
\begin{lem} \label{lem:tilde symmetry}
	We have
	\begin{equation}
		\tilde \tau P \tilde \tau^* = Q, \quad \tilde \tau Q \tilde \tau^* = P.
	\end{equation}
	It follows that $\tilde \tau B \tilde \tau^* = B$ and $\tilde \tau A \tilde \tau^* = -A$, hence $\tilde \tau E_{\lambda} = E_{-\lambda}$ for all eigenvalues $\lambda$ of $A$.
\end{lem}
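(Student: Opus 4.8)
The plan is to unfold $\tilde\tau = U\tau$ and transport the two hypotheses $\tau U\tau^* = U^*$ and $\tau P\tau^* = P$ through the conjugation, using only that $\tilde\tau$ is again antiunitary with $\tilde\tau^2 = -\I$. First I would record the standard bookkeeping for an antiunitary $V$: since $V^* = V^{-1}$, the relation $V^2 = -\I$ forces $V^* = -V$ and $(V^*)^2 = (V^2)^* = -\I$. Applying this to $\tau$ and to $\tilde\tau$ gives $\tilde\tau^* = (U\tau)^* = \tau^* U^* = -\tau U^*$, and it is convenient to rewrite the hypothesis $\tau P\tau^* = P$ in the equivalent form $\tau P\tau = -P$.

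The identity $\tilde\tau P\tilde\tau^* = Q$ is then immediate:
\begin{equation*}
\tilde\tau P\tilde\tau^* = (U\tau)P(-\tau U^*) = -U(\tau P\tau)U^* = -U(-P)U^* = UPU^* = Q.
\end{equation*}
For the second identity I would conjugate once more and use $\tilde\tau^2 = -\I$, avoiding a repeat computation:
\begin{equation*}
\tilde\tau Q\tilde\tau^* = \tilde\tau(\tilde\tau P\tilde\tau^*)\tilde\tau^* = \tilde\tau^2\, P\, (\tilde\tau^*)^2 = (-\I)P(-\I) = P.
\end{equation*}
Since conjugation by $\tilde\tau$ is additive and fixes $\I$ (because $\tilde\tau\tilde\tau^* = \I$), the statements for $B$ and $A$ follow by linearity: $\tilde\tau B\tilde\tau^* = \tilde\tau(\I - P - Q)\tilde\tau^* = \I - Q - P = B$, and, writing $A = Q - P$, $\tilde\tau A\tilde\tau^* = \tilde\tau Q\tilde\tau^* - \tilde\tau P\tilde\tau^* = P - Q = -A$.

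Finally, for the eigenspace claim I would use that $A = UPU^* - P$ is self-adjoint, so its eigenvalues are real. From $\tilde\tau A\tilde\tau^* = -A$ we get $A\tilde\tau = -\tilde\tau A$, so if $A\phi = \lambda\phi$ with $\lambda \in \R$ then $A(\tilde\tau\phi) = -\tilde\tau(\lambda\phi) = -\overline\lambda\,\tilde\tau\phi = -\lambda\,\tilde\tau\phi$, the complex conjugate appearing because $\tilde\tau$ is antilinear and being harmless here since $\lambda$ is real. Thus $\tilde\tau E_\lambda \subseteq E_{-\lambda}$, and applying the same inclusion with $\lambda$ replaced by $-\lambda$ together with the invertibility of $\tilde\tau$ yields $\tilde\tau E_\lambda = E_{-\lambda}$. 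The whole argument is routine; the only point needing genuine care is the sign and adjoint bookkeeping for antiunitary operators, and in particular keeping track that the hypothesis reads $\tau U\tau^* = U^*$ (orientation-reversing) rather than $\tau U\tau^* = U$ — it is precisely this minus sign that makes $\tilde\tau$ exchange $P$ and $Q$ and send $A$ to $-A$.
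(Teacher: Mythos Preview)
Your proof is correct and supplies exactly the routine computations the paper omits with the phrase ``from straightforward calculations we get''. One minor remark: your closing comment slightly overstates the role of $\tau U\tau^* = U^*$ in the first identity, since $\tilde\tau P\tilde\tau^* = U(\tau P\tau^*)U^* = UPU^* = Q$ uses only $\tau P\tau^* = P$; the hypothesis on $U$ enters only through $\tilde\tau^2 = -\I$, which you invoke for the second identity.
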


\begin{lem} \label{lem:Kramers degeneracy}
	For $\lambda \not\in \{-1, 0, 1\}$, the spaces $E_{\lambda}$ have even dimension.
\end{lem}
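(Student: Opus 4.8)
The plan is to equip the eigenspace $E_\lambda$ with an antiunitary operator squaring to $-\I$ and then invoke the Kramers argument, which forces even dimension. The natural candidate combines the antiunitary $\tilde\tau$ of Lemma~\ref{lem:tilde symmetry} with a suitably normalized version of $B$, exploiting that $B$ intertwines $E_\lambda$ and $E_{-\lambda}$.

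First I would record the elementary structural facts. Since $A$ is compact and $\lambda\neq 0$, the space $E_\lambda$ is finite-dimensional. Since $A$ and $B$ are self-adjoint with $A^2+B^2=\I$, we have $\norm{A}\le 1$, hence $\lambda\in(-1,1)$ for $\lambda\notin\{-1,0,1\}$ and $\kappa:=\sqrt{1-\lambda^2}>0$. The subspace $E_\lambda\oplus E_{-\lambda}$ is invariant under $A$ and (by $AB+BA=0$) under $B$, and there $B^2=\I-A^2=\kappa^2\I$; as $B$ is self-adjoint this shows that $\hat B:=\kappa^{-1}B$ restricts to a self-adjoint unitary on $E_\lambda\oplus E_{-\lambda}$, which by Lemma~\ref{lem:spectral symmetry} interchanges $E_\lambda$ and $E_{-\lambda}$.

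Next I would set $\tilde\tau_\lambda:=\hat B\,\tilde\tau$ restricted to $E_\lambda$. Because $\tilde\tau$ maps $E_\lambda$ antiunitarily onto $E_{-\lambda}$ (Lemma~\ref{lem:tilde symmetry}) and $\hat B$ maps $E_{-\lambda}$ unitarily back onto $E_\lambda$, the operator $\tilde\tau_\lambda$ is a well-defined antiunitary of $E_\lambda$. Since $\tilde\tau B\tilde\tau^*=B$ and $\kappa$ is real, $\tilde\tau$ commutes with $\hat B$ on $E_\lambda\oplus E_{-\lambda}$; combining this with $\tilde\tau^2=-\I$ and $\hat B^2=\I$ gives $\tilde\tau_\lambda^2=\hat B\tilde\tau\hat B\tilde\tau=\hat B^2\tilde\tau^2=-\I$ on $E_\lambda$. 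Finally I would invoke the standard Kramers argument: an antiunitary $V$ with $V^2=-\I$ satisfies $\braket{v}{Vv}=0$ for every $v$ (apply $\braket{Vx}{Vy}=\overline{\braket{x}{y}}$ with $x=v$, $y=Vv$), so the finite-dimensional space $E_\lambda$ decomposes into mutually orthogonal $V$-invariant planes $\Span\{v,Vv\}$, whence $\dim E_\lambda$ is even.

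The only real content is in the middle two steps: verifying that $\kappa^{-1}B$ is a genuine unitary intertwiner of $E_\lambda$ and $E_{-\lambda}$ and that it commutes with $\tilde\tau$, so that the composite squares to $-\I$ rather than $+\I$. I do not anticipate a serious obstacle here — everything happens on the finite-dimensional reducing subspace $E_\lambda\oplus E_{-\lambda}$, and the needed inputs are exactly Lemmas~\ref{lem:spectral symmetry} and~\ref{lem:tilde symmetry}; note that excluding $\lambda\in\{-1,0,1\}$ is precisely what makes $E_\lambda$ finite-dimensional and $\kappa$ nonzero, which is why these values must be omitted.
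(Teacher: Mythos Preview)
Your proof is correct and is essentially the same as the paper's: both construct the antiunitary $\theta=(B^*B)^{-1/2}B\,\tilde\tau$ on $E_\lambda$ (your $\hat B=\kappa^{-1}B$ is exactly $(B^*B)^{-1/2}B$ on $E_\lambda\oplus E_{-\lambda}$, since $B^*B=(1-\lambda^2)\I$ there), verify $\theta^2=-\I$ using $\tilde\tau B=B\tilde\tau$ and $\tilde\tau^2=-\I$, and then appeal to Kramers degeneracy.
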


\begin{proof}
	From Lemma \ref{lem:spectral symmetry} we have that $B$ is a linear isomorphism from $E_{\lambda}$ to $E_{-\lambda}$ so $(B^* B)^{-1/2}B$ is a unitary from $E_{\lambda}$ to $E_{-\lambda}$. By Lemma \ref{lem:tilde symmetry}, $\tilde \tau$ maps $E_{-\lambda}$ to $E_{\lambda}$ so the anti-unitary $\theta = (B^* B)^{-1/2} B \tilde \tau$ maps $E_{\lambda}$ to itself and is odd:
	\begin{equation}
		\theta^2 = \left( (B^* B)^{-1/2} B \tilde \tau \right)^2 = (B^* B)^{-1} B^2 \tilde \tau^2 = -(B^* B)^{-1} B^* B = -\I
	\end{equation}
	where we used $\tilde \tau^2 = -\I$ and $B \tilde \tau = \tilde \tau B$ (cf. Lemma \ref{lem:tilde symmetry}).
	By Lemma \ref{lem:quaternionic structure} this leads to Kramers degeneracy, \ie $\dim E_{\lambda}$ is even.
\end{proof}

\subsection{Decoupling}

The vanishing of $A = U P U^* - P$ means that the unitary $U$ leaves the subspaces $\Ran P$ and $\Ran P^\perp$ invariant. In this case, we call $U$ ``decoupled'' (with respect to $P$). Vice versa, if $A$ does not vanish, $U$ ``couples'' $\Ran P$ and $\Ran P^\perp$.
The following proposition states that we can always decouple $U$ if the $+1$-eigenspace of $A$ is even-dimensional and almost decouple $U$ if it is odd-dimensional:
\begin{prop} \label{prop:maximal decoupling}
	Let $U$ and $P$ be as above. Then there exists a unitary $W$ with $U - W$ compact, $\tau W \tau^* = W^*$ and:
	\begin{itemize}
		\item If $\dim \ker (A - \I) \,\,\, \mod \, 2 = 0$, then $W P W^* - P = 0$.
		\item If $\dim \ker (A - \I) \,\,\, \mod \, 2 = 1$, then
			\begin{equation}
				W P W^* - P = \Pi_+ - \Pi_-
			\end{equation}
			where $\Pi_+$ and $\Pi_-$ are one-dimensional projections.
	\end{itemize}
	Moreover, if $[U, P]$ is Schatten-$p$, then $U - W$ is also Schatten-$p$.
\end{prop}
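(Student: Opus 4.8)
The plan is to build $W$ by modifying $U$ eigenspace-by-eigenspace of $A$, exploiting that each eigenspace is finite-dimensional (since $A$ is compact) and that the whole decomposition of the ambient Hilbert space along the spectral subspaces of $A$ is preserved by $\tilde\tau$ and intertwined by $B$. The strategy mimics the asymmetric Wold/Halmos-type argument of \cite{AschBourgetJoye2020}, but now carrying the odd time-reversal structure through.

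First I would set up the orthogonal decomposition of $\caH$ adapted to $A = UPU^*-P = Q - P$. Away from the three special eigenvalues $\{-1,0,1\}$, Lemma \ref{lem:spectral symmetry} and Lemma \ref{lem:tilde symmetry} show that $E_\lambda$ and $E_{-\lambda}$ are paired both by $B$ and by $\tilde\tau$, each $E_\lambda$ is finite-dimensional, and by Lemma \ref{lem:Kramers degeneracy} (via the quaternionic structure) $\dim E_\lambda$ is even. On each such pair $E_\lambda \oplus E_{-\lambda}$ the operators $P$, $Q$, $B$ restrict to explicit $2\times 2$-block form (in $\lambda$-dependent coordinates coming from diagonalizing $P$ and $Q$ simultaneously, which is possible because $B$ is invertible there), so $U$ restricted to this finite-dimensional space is a concrete rotation-type unitary that ``couples'' $\Ran P$ and $\Ran P^\perp$. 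On the big complementary subspace (the spectral subspace of $A$ for eigenvalue $0$ together with, after the delicate handling below, the $\pm 1$ pieces) $U$ is already decoupled or nearly so.

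Next I would define $W$ as follows: on $E_0$ (where $A=0$, so $UPU^* = P$ there and $U$ already commutes with $P$ on the relevant invariant subspace) keep $W = U$; on each finite-dimensional pair $E_\lambda \oplus E_{-\lambda}$ with $\lambda \notin\{-1,0,1\}$, replace the rotation-type $U$-block by the corresponding ``flip'' block that maps $\Ran P$ to $\Ran P$ and $\Ran P^\perp$ to $\Ran P^\perp$, i.e.\ the decoupled unitary nearest to $U$ that still satisfies $\tau W \tau^* = W^*$ on that block — here one uses the $\tilde\tau$-pairing and the quaternionic structure to perform the modification Kramers-pair by Kramers-pair so that the time-reversal relation is preserved exactly. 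Since $A$ is compact, only finitely many eigenvalues $\lambda$ lie outside any neighborhood of $0$, and $\|U - W\|$ on the $\lambda$-block is $O(|\lambda|)$; summing, $U - W$ is compact, and if $[U,P]$ (equivalently $A$) is Schatten-$p$ then the blockwise bound gives $U - W \in \mathfrak{S}^p$. The parity of $\dim\ker(A-\I) = \dim E_{+1}$ is exactly what obstructs decoupling $U$ on the $E_{+1}$ (paired with $E_{-1}$) sector: if it is even, the same Kramers-pairwise flip works and we get $WPW^* = P$; if it is odd, we can flip all but one Kramers pair's worth — but Kramers pairs are two-dimensional, so ``odd'' here should be read as: after the even part is decoupled there remains a single $2$-dimensional $\tilde\tau$-invariant piece on which $U$ cannot be fully decoupled, and on it the best we can do produces $WPW^* - P = \Pi_+ - \Pi_-$ with $\Pi_\pm$ one-dimensional projections onto the $\pm1$-eigenspaces of $A$ restricted there.

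The main obstacle, and the step needing the most care, is the odd $1$-eigenvalue case: showing that the residual non-decoupled block is genuinely $2$-dimensional with $WPW^* - P$ equal to a difference of two one-dimensional projections, while simultaneously keeping $\tau W \tau^* = W^*$ intact. This requires understanding precisely how $E_{+1}$ and $E_{-1}$ sit relative to $\Ran P$ and $\Ran Q$ (on $E_{+1}$, $A = Q-P = \I$ forces $P = 0$, $Q = \I$ there, so $E_{+1} \subset \Ran Q \cap \Ran P^\perp$, and dually $E_{-1}\subset \Ran P \cap \Ran Q^\perp$), combining this with $\tilde\tau E_{+1} = E_{-1}$ and the odd symmetry to see that $\dim E_{+1} \equiv \dim E_{-1}$, and that when this common dimension is odd one is forced to leave exactly a $1+1$ coupled pair. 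I would handle this by choosing, inside $E_{+1}\oplus E_{-1}$, a symmetric decomposition into the part that can be absorbed into a decoupled unitary and a single distinguished $\tilde\tau$-pair $(\ket{+}, \tilde\tau\ket{+})$ realizing $\Pi_+ = \kettbra{+}$-type and $\Pi_- = \tilde\tau$-conjugate projections, then verifying the Schatten bound is unaffected since only finitely many dimensions are touched. With Proposition \ref{prop:maximal decoupling} in hand, Theorem \ref{thm:symmetric Wold} follows by identifying the residual coupled piece with the model unitary $S$ of Eq.\ \eqref{eq:shiftau}; but that identification is a separate step beyond the present statement.
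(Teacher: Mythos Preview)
Your overall architecture---decompose along the spectrum of $A=Q-P$, handle the generic eigenspaces uniformly, and isolate the parity obstruction in $E_{+1}\oplus E_{-1}$---matches the paper's, and your treatment of the $\pm 1$ eigenspaces (splitting off a single $\tilde\tau$-pair when $\dim E_{+1}$ is odd) is essentially what the paper does. The genuine gap is your repeated assumption that $U$ itself restricts to the blocks $E_\lambda\oplus E_{-\lambda}$ (``$U$ restricted to this finite-dimensional space is a concrete rotation-type unitary'', ``on $E_0$ keep $W=U$''). This is false: the eigenspaces of $A^2$ are invariant under $P$ and $Q$ (since $[P,A^2]=[Q,A^2]=0$), but not under $U$. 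Already for the bilateral shift with $P$ the half-line projection, $U$ moves vectors between $E_0$ and $E_{\pm 1}$. So you cannot define $W$ block-by-block on these spaces.

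The paper fixes this by writing $W=VU$ and constructing the \emph{correction} $V$ (rather than $W$) on the $A$-spectral decomposition; the decoupling condition $[W,P]=0$ becomes $PV=VQ$, and the symmetry $\tau W\tau^*=W^*$ becomes $\tilde\tau V\tilde\tau^*=V^*$, both of which involve only $P$, $Q$, and $\tilde\tau$---operators that \emph{do} respect $E_\lambda\oplus E_{-\lambda}$. On $E^\perp=(E_{+1}\oplus E_{-1})^\perp$ the paper does not work eigenspace-by-eigenspace but gives a single closed formula: $X=B(\I-2Q)=\I-P-Q+2PQ$ satisfies $PX=XQ$, $\tilde\tau X\tilde\tau^*=X^*$, and $X-\I=PA-AQ$; its polar part $\Vtw=(X^*X)^{-1/2}X$ on $E^\perp$ is the desired decoupler. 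This formula also delivers the Schatten-$p$ bound directly (since $X-\I$ inherits Schatten-$p$ from $A$, and then Lemma~\ref{lem:Schatten-p lemma} transfers it to $\Vtw-\I$), whereas your ``$\|U-W\|$ on the $\lambda$-block is $O(|\lambda|)$'' would, even after reformulation in terms of $V$, require an explicit construction to justify---abstract existence of an intertwiner on each block does not give the needed quantitative control.
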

The remainder of this subsection is devoted to construct the $W$ in this proposition. Taking $W$ to be of the form $W = V U$ we first note that the decoupling condition $[W, P] = 0$ translates to
\begin{equation}
P V = V Q
\end{equation}
where $Q = U P U^*$. We call such a $V$ a ``decoupler''. The symmetry constraint on $W$ implies that $V$ has to satisfy $\tilde\tau V\tilde\tau^*=V^*$, where $\tilde\tau=U\tau$ as above. Moreover, since we want $U - W$ compact, we must have $V-\I$ compact.

If $\dim \ker (A - \I) \,\,\, \mod \, 2 = 1$ we will not be able to find a $V$ satisfying these requirements. Although we do not prove it here, this is actually impossible. 

We construct $V$ in two steps. First, following \cite{LongVersion} we construct a decoupler on the orthogonal complement of $E = E_{+1} \oplus E_{-1}$. In the second step we try to construct a decoupler on $E$. This turns out to be possible only if $\dim \ker (A - \I) \,\,\, \mod \, 2 = 0$. In the other case we can only decouple $W$ up to a two-dimensional subspace.

\subsubsection{Decoupling on the orthogonal complement of \texorpdfstring{$E_{+1} \oplus E_{-1}$}{E}.} \label{sec:decoupling on E^perp}

Define the operator
\begin{equation}
	X = B (\I - 2Q)  = (\I - 2 P) B = \I - P - Q + 2 P Q.
\end{equation}
This operator satisfies the decoupling condition, i.e.
\begin{equation} \label{eq:X intertwines}
	P X = X Q = P Q,
\end{equation}
Moreover, $X - \I = PA  - AQ$ is compact because $A$ is, and if $[U, P]$ is Schatten-$p$, then so is $A = [U, P] U^*$ and therefore also $X$. Yet, $X$ is not unitary:
\begin{equation}
	X X^* = X^* X = \frac{X + X^*}{2} = B^2.
\end{equation}
It follows however from this equation that $X$ is normal and that its spectrum is contained in the circle $\{ z \in \C \, : \, (\Im z)^2 + (\Re z - 1/2)^2 = 1/4 \}$. 

The kernel of $X$ is precisely $E = E_{+1} \oplus E_{-1}$ since it coincides with the kernel of $X^*X = B^2 = A^2 - \I$. 
On its orthogonal complement $E^\perp = E_{(-1, 1)}$ we define a unitary $\Vtw$ by
\begin{equation}
\Vtw := (X^* X)^{-1/2}|_{E^\perp}X|_{E^\perp},
\end{equation}
which is well-defined because $X$ is normal, and $(X^* X)^{-1/2}$ is strictly positive on $E^\perp$. The unitary $\Vtw$ has the same eigenvectors as $X|_{E^\perp}$ but with the eigenvalues rescaled to have modulus one.

Since $X - \I$ is compact, the spectrum of $X$ consists of eigenvalues of finite multiplicity, possibly accumulating at 1. One easily sees that $\Vtw - \I$ is also compact. In fact, if $X - \I$ is Schatten-$p$, then so is $\Vtw - \I$, see Lemma \ref{lem:Schatten-p lemma}. Moreover, since by \eqref{eq:X intertwines} $X$ maps the range of $Q$ into the range of $P$, so does its partial isometry. Since $\Vtw$ is the partial isometry of $X|_{E^\perp}$ it follows that
\begin{equation} \label{eq:widetilde V decouples}
	P|_{E^\perp} \Vtw = \Vtw Q|_{E^\perp}
\end{equation}
where we use that both $P$ and $Q$ leave $E^\perp$ invariant.

Note now that since $\tau U \tau^* = U^*$, the subspace $E^\perp$ is invariant under $\tilde \tau = \tau U$. Indeed, by Lemma \ref{lem:tilde symmetry}, we have $\tilde \tau E_{\lambda} = E_{-\lambda}$ for any $\lambda \in \R$ so in particular, $E^\perp = E_{(-1, 1)}$ is invariant under $\tilde \tau$. Moreover, by Lemma \ref{lem:tilde symmetry} we have $\tilde \tau X \tilde \tau^* = X^*$ which implies
\begin{equation}
	\tilde \tau \Vtw \tilde \tau^* = \Vtw^*
\end{equation}
on $E^\perp$. Thus, $\Vtw$ is a decoupler on $E^\perp$ that is compatible with the symmetry condition of $W$.

\subsubsection{Decoupling on \texorpdfstring{$E_{+1} \oplus E_{-1}$}{E}.} \label{sec:decoupling on E}

It remains to find an as-good-as-possible decoupler on the remaining subspace $E = E_{+1} \oplus E_{-1}$. Since the restriction of $Q$ to $E$ is the projection onto $E_{+1}$ and the restriction of $P$ to $E$ is the projection onto $E_{-1}$, such a decoupler has to swap the spaces $E_{+1}$ and $E_{-1}$. We construct a unitary $v$ on $E$ that achieves this as best as possible. Moreover, we require that $\tilde \tau v \tilde \tau^* = v^*$. This symmetry induced constrant is crucial: without it a $v$ swapping $E_{+1}$ and $E_{-1}$ can always be found because these spaces have the same dimension by Lemma \ref{lem:tilde symmetry}.

By Lemma \ref{lem:tilde symmetry}, $\tilde \tau$ maps $E_{+1}$ to $E_{-1}$ and vica versa. Let $\{\phi_1, \cdots, \phi_{2m+k}\}$ be an orthonormal basis of $E_{+1}$ and take accordingly $\{ \tilde \tau \phi_1, \cdots, \tilde \tau \phi_{2m+k} \}$ as orthonormal basis of $E_{-1}$ with $k=0,1$ depending on whether $\dim E_{+1}$ is even or odd. If $\dim \ker (A - \I) \,\,\, \mod \, 2 = \dim E_{+1} \,\,\, \mod \, 2 = 0$ then $E_{+1}$ and $E_{-1}$ are both even dimensional and we take $F_{\rest} = \{ 0 \}$, $F_{+1} = E_{+1}$ and $F_{-1} = E_{-1}$. If $\dim \ker (A - \I) \,\,\, \mod \, 2 = \dim E_{+1} \,\,\, \mod \, 2 = 1$ then $\dim E_{+1} = \dim E_{-1} = 2m + 1$ are odd and we take $F_{+1} = \Span \{ \phi_1, \cdots, \phi_{2m} \} \subset E_{+1}$, we take $F_{-1} = \tilde \tau F_{+1} = \Span\{ \tilde \tau \phi_1, \cdots \tilde \tau \phi_{2m} \}$, and $F_{\rest} = \Span\{ \phi_{2m+1}, \tilde \tau \phi_{2m+1} \}$. In either case, $\tilde \tau$ leaves $F_{+1} \oplus F_{-1}$ invariant and in the chosen basis takes the form
\begin{equation}
	\tilde \tau|_{F_{+1} \oplus F_{-1}} = \begin{bmatrix} 0 && -\I \\ \I && 0 \end{bmatrix} K
\end{equation}
where $K$ is complex conjugation. According to the decomposition $E = F_{\rest} \oplus (F_{+1} \oplus F_{-1})$ we then take $v$ to be
\begin{equation}\label{eq:v}
	v = \I \oplus \begin{bmatrix} 0 && \begin{matrix} 0 && -\I \\ \I && 0 \end{matrix} \\ \begin{matrix} 0 && -\I \\ \I && 0  \end{matrix} && 0  \end{bmatrix}.
\end{equation}
This leaves $F_{\rest}$ invariant, swaps $F_{+1}$ and $F_{-1}$, and satisfies $\tilde \tau v \tilde \tau^* = v^*$.

\subsubsection{Proof of Proposition \ref{prop:maximal decoupling}.}

	Let
	\begin{equation}
		V = v \oplus \Vtw
	\end{equation}
	with $v$ the (almost) decoupler on $E = E_{+1} \oplus E_{-1}$ from \eqref{eq:v} and $\Vtw$ the decoupler on $E^{\perp}$ constructed in Section \ref{sec:decoupling on E^perp}.	By construction, $\tilde \tau V \tilde \tau^* = V^*$ so that $W = V U$ satisfies $\tau W \tau^* = W^*$. Moreover, since $\Vtw - \I$ is compact and $E$ is finite dimensional, $V - \I$ is compact and so is $W - U = (V - \I) U$.	By \eqref{eq:widetilde V decouples} we have that
	\begin{equation}
		(W P W^* - P)|_{E^\perp} = \Vtw Q|_{E^\perp} \Vtw^* - P|_{E^\perp} = 0
	\end{equation}
	where $Q = U P U^*$. 

	It remains to see how $W^* P W - P$ acts on the subspace $E$. We have
	\begin{equation}
		(W P W^* - P)|_E = v Q|_{E} v^* - P_E = v q v^* - p
	\end{equation}
	where $q = Q|_E$ and $p = P|_{E}$ are the projections on $E_{+1}$ and $E_{-1}$, respectively.
	If $\dim \ker (A - \I) \,\,\, \mod \, 2 = 0$, the unitary $v$ swaps $E_{+1}$ and $E_{-1}$ so $v q v^* - p = 0$ and therefore $W P W^* - P = 0$. 
	If $\dim \ker (A - \I) \,\,\, \mod \, 2 = 1$, we decompose $E=F_\rest\oplus(F_{+1}\oplus F_{-1})$ as above. The unitary $v$ in \eqref{eq:v} leaves $F_\rest$ invariant and swaps $F_{+1}$ and $F_{-1}$. Thus $(v q v^* - p)|_{F_{+1}\oplus F_{-1}} = 0$ and $(v q v^* - p)|_{F_\rest} = \Pi_{+} - \Pi_-$ 
	where $\Pi_+$ and $\Pi_-$ are one-dimensional projections. 
	We conclude that
	\begin{equation}
		W P W^* - P = \Pi_+ - \Pi_-.
	\end{equation}

	Finally, we saw in Section \ref{sec:decoupling on E^perp} that if $[U, P]$ is Schatten-$p$ then $\Vtw - \I$ is Schatten-$p$. Since $E$ is finite dimensional, $V - \I$ is also Schatten-$p$. Therefore $W - U = (V - \I) U$ is Schatten-$p$ and so is
	\begin{equation}
		[W, P] = [(W - U), P] + [U, P].
	\end{equation}
	\hfill\qedsymbol

\subsection{The symmetric Wold construction}

For two projections $P$ and $P'$ we write $P \preceq P'$ if $\Ran P \subset \Ran P'$. In the proof of the symmetric Wold decomposition we will need the following abstract result:
\begin{prop} \label{prop:TR Wold decomposition}
	Let $W$ be a unitary and $P$ a projection with $\tau W \tau^* = W^*$ and $\tau P \tau^* = P$ for an odd time-reversal symmetry $\tau$ and such that
	\begin{equation}
		W P W^* - P = \Pi_+ - \Pi_-
	\end{equation}
	where $\Pi_+$ and $\Pi_-$ are one-dimensional projections. For $k \in \Z$, let $\Pi^{(k)}_+ := \Ad_W^{k-1}(\Pi_+)$ and $\Pi^{(k)}_- := \Ad_{W^*}^k(\Pi_-)$. Then:
	\begin{enumerate}
		\item These projections are mutually orthogonal, \ie
		\begin{equation}\label{eq:orth_claims}
			\Pi_{\sigma}^{(k)} \Pi_{\sigma'}^{(l)} = \delta_{k,l} \delta_{\sigma,\sigma'} \Pi_{\sigma}^{(k)}
		\end{equation}
		for all $k, l \in \Z$ and $\sigma, \sigma' \in \{+, -\}$. 
		\item For $k \leq 0$ we have $\Pi_{+}^{(k)}, \Pi^{(k)}_- \preceq P$ while for $k \geq 1$ we have $\Pi_{+}^{(k)}, \Pi_-^{(k)} \preceq P^\perp$. 
		\item For all $k \in \Z$ the projections $\Pi_+^{(k)}$ and $\Pi_-^{(k)}$ form a Kramers pair, \ie
		\begin{equation}\label{eq:Kramers pairs in Wold decomposition}
			\tau \Pi_{\pm}^{(k)} \tau^* = \Pi_{\mp}^{(k)}.
		\end{equation}
	\end{enumerate}
\end{prop}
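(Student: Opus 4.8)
The plan is to verify the three claims essentially by algebraic manipulation starting from the single relation $WPW^* - P = \Pi_+ - \Pi_-$, exploiting that $\Pi_+$ and $\Pi_-$ are one-dimensional (so the right-hand side has rank at most two) and that $\Pi_\pm \perp \Pi_\mp$. A useful first observation is that, conjugating $WPW^* - P = \Pi_+ - \Pi_-$ by $\tau$ and using $\tau W \tau^* = W^*$, $\tau P\tau^* = P$, one gets $W^* P W - P = \tau\Pi_+\tau^* - \tau\Pi_-\tau^*$; since the left-hand side equals $-\Ad_{W^*}(WPW^*-P) = \Ad_{W^*}(\Pi_- - \Pi_+)$, and since $\tau$ maps one-dimensional projections to one-dimensional projections, comparing the two rank-$\le 2$ self-adjoint operators (and tracking which summand is $\preceq P$ versus $\preceq P^\perp$) should pin down $\tau\Pi_+\tau^* = \Pi_-$, $\tau\Pi_-\tau^* = \Pi_+$, i.e. the $k=0$ (equivalently $k=1$, after shifting) case of claim (3). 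Then claim (3) for general $k$ follows by conjugating $\Pi_+^{(k)} = \Ad_W^{k-1}\Pi_+$ and $\Pi_-^{(k)} = \Ad_{W^*}^k\Pi_-$ by $\tau$ and pushing $\tau$ through the $W$'s using $\tau W\tau^* = W^*$: each $\Ad_W$ turns into $\Ad_{W^*}$ and vice versa, so $\tau\Pi_+^{(k)}\tau^* = \Ad_{W^*}^{k-1}(\tau\Pi_+\tau^*) = \Ad_{W^*}^{k-1}\Pi_-$, and one checks this equals $\Pi_-^{(k)} = \Ad_{W^*}^k\Pi_-$ using the $k=1$ identity $\Ad_{W^*}\Pi_- $ relation — essentially the same index bookkeeping as in the non-symmetric Wold construction of \cite{AschBourgetJoye2020}, with $\tau$ swapping the two towers.

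For claim (2), I would argue inductively in $k$. Rewrite $WPW^* - P = \Pi_+ - \Pi_-$ as $P - W^*PW = W^*(\Pi_- - \Pi_+)W$, i.e. $W^*PW = P - \Ad_{W^*}(\Pi_-) + \Ad_{W^*}(\Pi_+)$. Since $P$ and $W^*PW$ are both projections and the correction is rank $\le 2$, a standard argument (the difference of two projections that equals $R_+ - R_-$ with $R_\pm$ one-dimensional orthogonal projections forces $R_+ \preceq$ the larger one and $R_- \preceq$ the smaller) shows $\Ad_{W^*}(\Pi_+) = \Pi_-^{(1)} \preceq P^\perp$ while $\Ad_{W^*}(\Pi_-) = \Pi_-^{(0)}$-type term $\preceq P$. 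Iterating $\Ad_{W^*}$ transports these into deeper "layers," giving $\Pi_\pm^{(k)} \preceq P^\perp$ for $k \ge 1$; applying $\Ad_W$ instead, or equivalently using the $\tau$-symmetry to flip $P \leftrightarrow P$ but $W \leftrightarrow W^*$, handles $k \le 0$ and $\Pi_+^{(k)}$ for both sign ranges. One must be a little careful that $\Pi_+^{(k)}$ is defined with exponent $k-1$ and $\Pi_-^{(k)}$ with exponent $k$, so the two towers are offset by one step; this offset is exactly what makes the layers interleave correctly.

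Claim (1), mutual orthogonality, is where I expect the real work. Within a single tower, say $\{\Pi_+^{(k)}\}_k$, orthogonality of $\Pi_+^{(k)}$ and $\Pi_+^{(l)}$ for $k\ne l$ should follow because $\Pi_+^{(k)}\preceq P$ or $P^\perp$ according to the sign of $k$ (from claim (2)), which already separates $k \le 0$ from $k \ge 1$; and within each half-tower one uses that $WPW^*$ and $P$ differ by the rank-two perturbation to propagate orthogonality step by step, much as in \cite{AschBourgetJoye2020}. The genuinely new point is orthogonality \emph{between} the $+$ and $-$ towers, $\Pi_+^{(k)}\Pi_-^{(l)} = 0$: here the base case is $\Pi_+\Pi_- = 0$ (given), and one needs to show that applying $\Ad_W$ to the one and $\Ad_{W^*}$ to the other — with the built-in offset — never produces overlap. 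I would handle this by combining claim (2) (which already forces orthogonality whenever $k$ and $l$ lie on opposite sides of the $P$ / $P^\perp$ divide) with a direct computation for the remaining same-side cases, using $WPW^* - P = \Pi_+ - \Pi_-$ to rewrite $\langle \Pi_+^{(k)}\psi, \Pi_-^{(l)}\psi\rangle$ in terms of the known relation and the $\tau$-Kramers structure from claim (3). The main obstacle is thus the careful bookkeeping of indices and the $P$/$P^\perp$ location of each projection so that the "opposite side $\Rightarrow$ orthogonal" shortcut covers all but finitely many cases, with the finitely many same-side cases dispatched by hand using the defining rank-two identity.
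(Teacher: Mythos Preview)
Your overall architecture matches the paper: prove (3) first by conjugating $WPW^*-P=\Pi_+-\Pi_-$ with $\tau$ and comparing with $-\Ad_{W^*}(\Pi_+-\Pi_-)$, then feed (3) into an induction for (1) and (2). The paper organises the induction via the auxiliary projections $P^{(N)}=P+\sum_{n=1}^N(\Pi_+^{(n)}+\Pi_-^{(n)})$ and computes $WP^{(N)}W^*-P^{(N)}=\Pi_+^{(N+1)}-\Pi_-^{(N)}$, which is a clean way to do the ``step by step'' propagation you describe.

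There is one genuine slip. You assert that after using claim (2) to separate opposite $P/P^\perp$ sides, only \emph{finitely many} same-side cross-tower pairs $\Pi_+^{(k)},\Pi_-^{(l)}$ remain. That is false: for every $k,l\ge 1$ both projections sit under $P^\perp$, and there are infinitely many such pairs (likewise for $k,l\le 0$). Your plan as stated therefore does not close. It can be repaired: since $\Ad_W^m\Pi_+^{(k)}=\Pi_+^{(k+m)}$ while $\Ad_W^m\Pi_-^{(l)}=\Pi_-^{(l-m)}$, conjugating by a suitable power of $W$ shifts $k$ and $l$ in opposite directions, and one can always choose $m$ so that the shifted pair lands on opposite sides of the $P/P^\perp$ divide; orthogonality then pulls back. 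A similar shift (now moving both indices the same way) handles the within-tower pairs.

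The paper avoids this bookkeeping for the cross-tower case altogether with a nicer device you did not spot: for any $n\in\Z$ the anti-unitary $\tau W^n$ is odd, $(\tau W^n)^2=-\I$, and it sends $\Pi_+^{(m-n)}$ to $\Pi_-^{(m)}$. Kramers degeneracy for this anti-unitary then gives $\Pi_+^{(m-n)}\perp\Pi_-^{(m)}$ for all $m,n$ in one stroke, with no case analysis.
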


\begin{proof}
	We first prove the last statement of the proposition. Write $A = \Ad_W(P) - P = \Pi_+^{(1)} - \Pi_-^{(0)}$. From the assumptions we have
	\begin{equation}
	\begin{aligned}
		\tau \Pi_+^{(1)} \tau^* - \tau \Pi_-^{(0)} \tau^* &= \tau A \tau^* =  W^* P W - P = - \Ad_{W^*}(A) = \Ad_{W^*}(\Pi_-^{(0)}) - \Ad_{W^*}(\Pi_+^{(1)}) \\
								  &= \Pi_-^{(1)} - \Pi_+^{(0)}
	\end{aligned}
	\end{equation}
	so $\tau \Pi_+^{(1)} \tau^* = \Pi_-^{(1)}$ and $\tau \Pi_-^{(0)}\tau^*=\Pi_+^{(0)}$ and the claim follows for $k = 0, 1$. For any $k \in \Z$, \eqref{eq:Kramers pairs in Wold decomposition} follows from
	\begin{equation}
		\tau \Pi_+^{(k)} \tau^* = \tau \Ad_W^k(\Pi_+^{(0)}) \tau^* = \Ad_{W^*}^k(\Pi_-^{(0)}) = \Pi_-^{(k)}.
	\end{equation}	
	
	Since all these projections are one-dimensional, it further follows by Kramers degeneracy that
	\begin{equation} \label{eq:Kramers orthogonality of Pies}
		\Pi_+^{(k)} \Pi_-^{(k)} = 0 
	\end{equation}
	for all $k \in \Z$, which proves part of the orthogonality claims. We prove the remaining orthogonality claims in \eqref{eq:orth_claims} and the second statement by induction. 
	The $N^{\text{th}}$ induction hypothesis is that the family of one-dimensional projections $\{ \Pi_+^{(k)}, \Pi_-^{(k)} \}_{k = -N+1}^N$ is mutually orthogonal, and $\Pi_{\pm}^{(k)} \preceq P$ for $k = -N-1, \cdots, 0$ while $\Pi_{\pm}^{(k)} \preceq P^\perp$ for $k = 1, \cdots, N$.

	\textbf{Base case $N = 1 $:} We must show that $\{\Pi_{\pm}^{(0)}, \Pi_{\pm}^{(1)}\}$ form an orthogonal set, and $\Pi_{\pm}^{(0)} \subset P$ while $\Pi_{\pm}^{(1)} \subset P^\perp$. Since $A=\Pi_+^{(1)} - \Pi_-^{(0)}$ is self adjoint, we get $\Pi_+^{(1)} \perp \Pi_-^{(0)}$. Since conjugation by $\tau$ preserves orthogonality we get from  \eqref{eq:Kramers pairs in Wold decomposition} that also $\Pi_-^{(1)} \perp \Pi_+^{(0)}$.  Moreover, $\Pi_{+}^{(0)} \perp \Pi_-^{(0)}$ and $\Pi_{+}^{(1)} \perp \Pi_-^{(1)}$ by \eqref{eq:Kramers orthogonality of Pies}. Finally, since
	\begin{equation}
		A = W P W^* - P = \Pi_+^{(1)} - \Pi_-^{(0)}
	\end{equation}
	we have $\Pi_-^{(0)} \preceq P$ and $\Pi_+^{(1)} \preceq  P^\perp$. Moreover, since $P$ is $\tau$-invariant also $\Pi_+^{(0)} \preceq  P$ and $\Pi_-^{(1)} \preceq  P^\perp$, which proves the claims about inclusions in $P$ and $P^\perp$, and also the remaining orthogonality claims $\Pi_+^{(0)} \perp \Pi_+^{(1)}$ and $\Pi_-^{(0)} \perp \Pi_-^{(1)}$.

	\textbf{Induction step:} We assume the $N^{\text{th}}$ induction hypothesis to hold and derive the $(N+1)^{st}$. Let
	\begin{equation}
		P^{(N)} = P + \sum_{n = 1}^N \left( \Pi_+^{(n)} + \Pi_-^{(n)} \right).
	\end{equation}
	By the $N^{th}$ induction hypothesis and   \eqref{eq:Kramers pairs in Wold decomposition} this is a $\tau$-invariant projection and $P \preceq P^{(N)}$. Consider 
	\begin{equation}
		A^{(N)} := W P^{(N)} W^* - P^{(N)} = \Pi_+^{(N+1)} - \Pi_-^{(N)}.
	\end{equation}
	Since this is a difference of projections it follows that $\Pi_+^{(N+1)} \preceq (P^{(N)})^{\perp} \preceq P^\perp$, and by $\tau$-invariance also $\Pi_-^{(N+1)} \preceq P^\perp$. The claims $\Pi_{\pm}^{(-N)} \preceq P$ follow similarly by setting $P^{(-N)} = P - \sum_{n = 0}^{N-1} \left( \Pi_+^{(-n)} + \Pi_-^{(-n)} \right)$ and considering
	\begin{equation}
		A^{(-N)} := W P^{(-N)} W^* - P^{(-N)} = \Pi_+^{(-N+1)} - \Pi_-^{(-N)}.
	\end{equation}
	This proves the claims about inclusions in $P$ and $P^\perp$.

	To prove the orthogonality claims, note that $(\tau W^n)^2 = -\I$ for any $n$, and since
	\begin{equation}
		(\tau W^n) \Pi_+^{(m - n)} (\tau W^n)^* = \tau^* \Pi_+^{(m)} \tau = \Pi_-^{(m)}
	\end{equation}
	for any $m$, we see that $\Pi_+^{n} \perp \Pi_-^{(m)}$ for all $n, m \in \Z$.

	It remains to show that $\{\Pi_+^{(n)}\}_{n = -N}^{N+1}$ and $\{\Pi_-^{(n)}\}_{n = -N}^{N+1}$ are othogonal families. We already know that $\Pi_{\pm}^{N+1}$ is orthogonal to $\Pi_{\pm}^{n}$ for $n = -N, \cdots, 0$ because the former is a subprojection of $P^{\perp}$ while the latter are subprojections of $P$. Similarly, we know that $\Pi_{\pm}^{(-N)}$ is orthogonal to $\Pi_{\pm}^{(n)}$ for $n = 1, \cdots, N+1$.

	To see that $\Pi_{\pm}^{(N+1)}$ is orthogonal to $\Pi_{\pm}^{(n)}$ for $n = 1, \cdots, N$ we simply note that by induction hypothesis for any such $n$ we have $\Pi_{\pm}^{(0)} \perp \Pi_{\pm}^{(N+1-n)}$. Since conjugation by $W$ preserves orthogonality, we conclude that $\Pi_{\pm}^{(n)} \perp \Pi_{\pm}^{(N+1)}$ as required. The orthogonality of $\Pi_{\pm}^{(-N)}$ with $\Pi_{\pm}^{(n)}$ for $n = -N+1, \cdots, 0$ is proved in the same way. This proves all the required orthogonality relations and thereby concludes the induction step.
\end{proof}

\subsection{Proof of Theorem \ref{thm:symmetric Wold}}

	If $\dim \ker (A - \I) \,\,\, \mod \, 2 = 0$, all claims follow directly from Propsition \ref{prop:maximal decoupling}. 
	
	If $\dim \ker (A - \I) \,\,\, \mod \, 2 = 1$, then Proposition \ref{prop:maximal decoupling} provides a unitary $W$ with $U - W$ compact, $\tau W \tau^* = W^*$ and such that
	\begin{equation}
		\Ad_W(P) - P = \Pi_+ - \Pi_-
	\end{equation}
	with one-dimensional projections $\Pi_+$ and $\Pi_-$. Moreover, if $[U, P]$ is Schatten-$p$, then so is $U - W$. 
	
	For $k \in \Z$, let $\Pi^{(k)}_+ := \Ad_W^{k-1}(\Pi_+)$ and $\Pi^{(k)}_- := \Ad_{W^*}^k(\Pi_-)$. Decompose the Hilbert space as $\caH = \caH' \oplus \caH''$ where $\caH' = \bigoplus_{k \in \Z, \sigma \in \{+, -\}} \Ran \Pi_{\sigma}^{(k)}$. By Proposition \ref{prop:TR Wold decomposition}, both $W$ and $P$ leave $\caH'$ and $\caH''$ invariant so $W = W' \oplus W''$ and $P = P' \oplus P''$. Since $\Ran \Pi_+, \Ran \Pi_- \subset \caH'$, we have $[W'', P''] = 0$, and it only remains to identify $W'$ with the unitary $S$ described at the end of Section \ref{sec:symmetric Wold statement}.

	Let $\phi_1$ be a unit vector spanning $\Ran \Pi_+^{(1)}$ and define $\phi_k = W^{k-1} \phi_1$ for all $k \in \Z$. Then $\phi_k$ spans $\Ran \Pi_+^{(k)}$. For each $k \in \Z$, let $\overline{\phi}_k = \tau \phi_k$. Then $\overline{\phi}_k$ spans $\Ran \Pi_-^{(k)}$ and $\caH' = \Span \{ \phi_k, \overline{\phi}_k \, : \, k \in \Z\}$. On this space, the unitary $W'$ acts as
	\begin{equation}
	W' \phi_k = \phi_{k+1}, \quad  W' \overline{\phi}_k = \overline{\phi}_{k-1}
	\end{equation}
	for all $k \in \Z$. Indeed, $\phi_{k+1} = W^{k} \phi_1 = W \phi_k$ while $\overline{\phi}_{k} = \tau \phi_{k} = \tau W^{k-1} \phi_1 = W^{1-k} \tau \phi_1 = W^{1-k} \overline{\phi}_1$, so $\overline{\phi}_{k-1} = W^{2-k} \overline{\phi}_1 = W \overline{\phi}_k$. The unitary $W'$ is equivalent to $S$ in \eqref{eq:shiftau} by the isomorphism $\phi_x \mapsto \ket{x,+}$ and $\overline{\phi}_x \mapsto \ket{x,-}$.
	This concludes the proof.
	\hfill\qedsymbol

\appendix

\section{Kramers Degeneracy}
\begin{lem} \label{lem:quaternionic structure}
	Let $\theta$ be an anti-unitary operator on a finite dimensional Hilbert space $V$ with $\theta^2 = -\I$. Then $V$ is even-dimensional and has an orthonormal basis consisting of Kramers pairs, \ie there is an orthonormal basis $\{ \phi_1, \phi'_1, \cdots \phi_n, \phi'_n \}$ such that $\theta \phi_i = \phi'_i$ and $\theta \phi'_i = -\phi_i$ for all $i = 1, \cdots, n$.
\end{lem}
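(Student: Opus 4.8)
The plan is to build the claimed basis one Kramers pair at a time by an orthogonal induction on $\dim V$. First I would observe the elementary facts that make the induction work: for any nonzero $\phi \in V$, the vector $\theta\phi$ is nonzero (since $\theta$ is anti-unitary, hence injective), and moreover $\theta\phi \perp \phi$. The orthogonality is the one genuinely non-obvious point at the atomic level, so I would prove it carefully: using anti-unitarity, $\langle \phi, \theta\phi\rangle = \langle \theta(\theta\phi), \theta\phi\rangle$... wait, one has to be careful with conjugate-linearity; the clean way is $\langle \theta\phi, \theta\psi\rangle = \overline{\langle \phi,\psi\rangle} = \langle \psi,\phi\rangle$ for all $\phi,\psi$, and then applying this with $\psi = \theta\phi$ together with $\theta^2 = -\I$ gives $\langle \theta\phi, \theta(\theta\phi)\rangle = \langle \theta\phi, -\phi\rangle = -\langle\theta\phi,\phi\rangle$, while the identity also equals $\langle \theta\phi,\phi\rangle$; hence $\langle\theta\phi,\phi\rangle = 0$. (In particular $\dim V \geq 2$ whenever $V \neq \{0\}$.)

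Next I would set up the induction. Pick any unit vector $\phi_1 \in V$ and set $\phi_1' = \theta\phi_1$; by the above $\{\phi_1,\phi_1'\}$ is orthonormal, $\theta\phi_1 = \phi_1'$, and $\theta\phi_1' = \theta^2\phi_1 = -\phi_1$, so this is a Kramers pair. Let $W = \Span\{\phi_1,\phi_1'\}$ and $W^\perp$ its orthogonal complement. The key step is to check that $W^\perp$ is $\theta$-invariant: if $\psi \perp \phi_1$ and $\psi \perp \phi_1'$, then $\langle \theta\psi, \phi_1\rangle = \langle \theta\psi, -\theta\phi_1'\rangle = -\overline{\langle \psi,\phi_1'\rangle} = 0$ and similarly $\langle\theta\psi,\phi_1'\rangle = \langle\theta\psi,\theta\phi_1\rangle = \overline{\langle\psi,\phi_1\rangle} = 0$, so $\theta\psi \in W^\perp$. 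Hence $\theta$ restricts to an anti-unitary operator on $W^\perp$ still satisfying $\theta^2 = -\I$, and $\dim W^\perp = \dim V - 2$. Applying the induction hypothesis to $W^\perp$ produces an orthonormal basis of $W^\perp$ consisting of Kramers pairs; adjoining $\{\phi_1,\phi_1'\}$ gives the desired basis of $V$, and in particular $\dim V$ is even. The base case $\dim V = 0$ is trivial.

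I do not expect any serious obstacle here: the whole content is the two sign computations (that $\theta\phi \perp \phi$ and that $W^\perp$ is $\theta$-invariant), both of which are immediate from the defining identity $\langle\theta\phi,\theta\psi\rangle = \langle\psi,\phi\rangle$ and $\theta^2 = -\I$. The only thing to be mildly careful about is tracking conjugate-linearity consistently when manipulating inner products involving $\theta$; everything else is a routine dimension induction.
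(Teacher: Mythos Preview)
Your proposal is correct and follows essentially the same inductive construction as the paper's proof: pick a unit vector, pair it with its $\theta$-image, pass to the orthogonal complement, and repeat. If anything, your version is slightly more careful, since you explicitly verify that the orthogonal complement of a Kramers pair is $\theta$-invariant, a point the paper leaves implicit in its phrase ``repeating this construction.''
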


\begin{proof}
	
	Let $\phi_1$ be any vector in $V$ of unit length, and put $\phi'_1 = \theta \phi_1$. Then
	\begin{equation}
		\langle \phi_1, \phi'_1 \rangle = \langle \phi_1, \theta \phi_1 \rangle = \overline{ \langle \theta \phi_1, \theta^2 \phi_1 \rangle} = - \langle \phi_1, \theta \phi_1 \rangle = - \langle \phi_1, \phi'_1 \rangle,
	\end{equation}
	\ie $\phi_1 \perp \phi'_1$ and $\theta \phi'_1 = \theta^2 \phi_1 = -\phi_1$. Now pick any vector $\phi_2$ in the orthongonal complement of $\Span\{ \phi_1, \phi'_1\}$ and put $\phi'_2 = \theta \phi_2$. By the same reasoning as before, $\phi_2 \perp \phi'_2$ and $\theta \phi'_2 = -\phi_2$. Repeating this construction eventually yields the required basis $\{\phi_1, \phi'_1, \cdots, \phi_1, \phi'_n\}$.
\end{proof}

\section{Schatten-\texorpdfstring{$p$}{p} Lemma}

\begin{lem} \label{lem:Schatten-p lemma}
	Let $X$ be a normal operator with $X - \I$ compact, spectrum contained in the circle $\{ z \in \C \, : \, (\Im z)^2 + (\Re z - 1/2)^2 = 1/2\}$ and empty kernel. Let
	\begin{equation}
		V := (X^* X)^{-1/2}X.
	\end{equation}
	Then if $X - \I$ is Schatten-$p$, so is $V - \I$.
\end{lem}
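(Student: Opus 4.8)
The plan is to diagonalize $X$ and reduce the statement to an elementary $\ell^p$-estimate on its eigenvalues. Since $X$ is normal with $X - \I$ compact, it has a spectral decomposition $X = \sum_j z_j P_j$ with mutually orthogonal eigenprojections $P_j$, $\sum_j P_j = \I$, and eigenvalues $z_j$ (listed without multiplicity) that can accumulate only at $1$; each $P_j$ with $z_j \neq 1$ has finite rank, while an eigenvalue $z_j = 1$ (if present) contributes $0$ to everything below. Because $\ker X = \{0\}$, the value $0$ is not an eigenvalue of $X$, and as it differs from the only possible accumulation point $1$ we get $0 \notin \sigma(X)$; compactness of $\sigma(X)$ then gives $c := \inf_j |z_j| > 0$. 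Hence $(X^*X)^{-1/2} = \sum_j |z_j|^{-1} P_j$ is bounded and
\begin{equation}
	V = (X^*X)^{-1/2} X = \sum_j \frac{z_j}{|z_j|}\, P_j , \qquad V - \I = \sum_j \Bigl( \frac{z_j}{|z_j|} - 1 \Bigr) P_j .
\end{equation}
Since $z_j/|z_j| \to 1$, the operator $V - \I$ is compact; being normal and diagonal in the eigenbasis of $X$, its singular values are the numbers $\bigl| z_j/|z_j| - 1 \bigr|$ repeated $\operatorname{rank} P_j$ times, while those of $X - \I$ are $|z_j - 1|$ repeated $\operatorname{rank} P_j$ times.

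It therefore suffices to show that $\sum_j |z_j - 1|^p\, \operatorname{rank} P_j < \infty$ forces $\sum_j \bigl| z_j/|z_j| - 1 \bigr|^p \operatorname{rank} P_j < \infty$. I would split the sum at a neighborhood of $1$ on which $f(z) := z/|z|$ is Lipschitz. As $f$ is $C^1$ on $\C \setminus \{0\}$ with $f(1) = 1$, there exist $r \in (0, c)$ and $C < \infty$ such that $\bigl| f(z) - 1 \bigr| \leq C\,|z - 1|$ whenever $|z - 1| < r$ (this can also be checked by hand by parametrizing the relevant arc of the circle in the hypothesis). For the indices with $|z_j - 1| < r$ this yields $\bigl| z_j/|z_j| - 1 \bigr|^p \le C^p\,|z_j - 1|^p$, so that part of the sum is at most $C^p \sum_j |z_j - 1|^p \operatorname{rank} P_j < \infty$. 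Only finitely many indices satisfy $|z_j - 1| \ge r$, and each such $P_j$ has finite rank, so the complementary part is a finite sum of finite terms. Adding the two contributions shows $V - \I$ is Schatten-$p$, as claimed.

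I do not expect a genuine obstacle. The two points that need a little care are that $(X^*X)^{-1/2}$ is bounded — this is precisely where the empty-kernel hypothesis and compactness of $X - \I$ combine — and the Lipschitz bound for $z \mapsto z/|z|$ near $1$, which is elementary. Note that the specific ``spectrum on a circle'' assumption is not really used beyond consequences of the other hypotheses; I would retain it only because it reflects how $X$ arises in Section~\ref{sec:decoupling on E^perp}. The same computation, using merely $z_j/|z_j| \to 1$, also reproves that $V - \I$ is compact without any Schatten hypothesis, matching the remark preceding the statement.
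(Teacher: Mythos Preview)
Your proof is correct and follows essentially the same route as the paper's: both diagonalize the normal operator $X$ and reduce the Schatten-$p$ claim to the elementary eigenvalue estimate $\bigl|\,z/|z|-1\,\bigr|\le C\,|z-1|$ near $z=1$. Your version is somewhat more carefully written---you make the boundedness of $(X^*X)^{-1/2}$ and the near/far split explicit---but the argument is the same in substance.
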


\begin{proof}
	Let $\{\lambda_i\}_{i \in \N}$ be the non-zero eigenvalues of the compact operator $X - \I$ ordered such that $\abs{\lambda_i} \geq \abs{\lambda_{i+1}}$ for all $i$. This operator is Schatten-$p$ if and only if
	\begin{equation}
		\sum_{i \in \N} \abs{\lambda_i}^p < \infty.
	\end{equation}
	The non-zero eigenvalues of $V - \I$ are
	\begin{equation}
		\mu_i = \frac{\lambda_i + 1}{\abs{\lambda_i + 1}} -1.
	\end{equation}
	Note that since $X - \I$ has empty kernel, none of the $\lambda_i$ equal $-1$ so the $\mu_i$ are always well defined.

	For $\mu_i$ close to 0 we have $\abs{\mu_i} = \abs{\lambda_i} + \caO(\abs{\lambda_i}^2)$ so
	\begin{equation}
		\abs{\mu_i}^s = \abs{\lambda_i}^s + s \caO( \abs{\lambda_i}^2 \abs{\lambda_i}^{s-1}  ) = \abs{\lambda_i}^s + \caO( \abs{\lambda_i}^{s+1}  ).
	\end{equation}
	Both terms on the right-hand side are summable by assumtion, so
	\begin{equation}
		\sum_i \abs{\mu_i}^s < \infty,
	\end{equation}
	\ie $V - \I$ is Schatten-$p$.
\end{proof}

\section*{Acknowledgements}
A. Bols thanks the Villum Fonden through the QMATH Centre of Excellence (grant no. 10059). C. Cedzich acknowledges support by the Deutsche Forschungsgemeinschaft (DFG, German Research Foundation) -- project number 441423094.

\bibliographystyle{abbrvArXiv}
\bibliography{bib}

\end{document}